\pdfoutput=1
\documentclass[11pt]{article}

\usepackage{natbib}
\usepackage{geometry}
\geometry{margin=1.1in}

\usepackage[utf8]{inputenc}
\usepackage{times}
\usepackage{soul}
\usepackage{url}
\usepackage{hyperref}
\hypersetup{hidelinks}
\usepackage[utf8]{inputenc}
\usepackage[small]{caption}
\usepackage{graphicx}
\usepackage{amsmath}
\usepackage{amsthm}
\usepackage{booktabs}
\usepackage{algorithm}
\usepackage{algorithmic}
\usepackage{amsfonts}
\usepackage{xcolor}
\usepackage{multicol}
\urlstyle{same}

\usepackage{pgfplots}
\pgfplotsset{compat=newest}
\usepgfplotslibrary{colorbrewer}
\usepackage{pgfplotstable}
\usetikzlibrary{pgfplots.groupplots}

\usepackage{braket}
\usepackage{dsfont}
\usepackage{longtable}

\newtheorem{example}{Example}
\newtheorem{theorem}{Theorem}
\newtheorem{definition}{Definition}
\newtheorem{remark}{Remark}
\newtheorem{note}{Note}

\usepackage{subfig}
\usepackage{cleveref}
\usepackage{xcolor}

\newcommand{\wmc}{\ensuremath{\mathsf{WMC}}}
\newcommand{\wfomc}{\ensuremath{\mathsf{WFOMC}}}
\newcommand{\fotwo}{\ensuremath{\textbf{FO}^2}}
\newcommand{\ctwo}{\ensuremath{\textbf{C}^2}}
\newcommand{\barw}{\ensuremath{\overline{w}}}
\newcommand{\hb}{\ensuremath{\mathsf{HB}}}
\newcommand{\sm}{Smokes}
\newcommand{\fr}{Friends}
\newcommand{\edge}{E}
\newcommand{\heads}{Heads}

\newcommand{\Nat}{\mathbb{N}}

\newcommand{\Mat}[1]{\ensuremath{\textbf{#1}}}
\newcommand{\spectrum}[0]{\mathfrak{S}}

\title{On Discovering Interesting Combinatorial Integer Sequences}
\author{
Martin Svato\v{s}$^1$\and
Peter Jung$^1$\and
Jan T\'{o}th$^1$\and 
Yuyi Wang$^{2,3}$\and \\
Ond\v{r}ej Ku\v{z}elka$^1$\\ \\
$^1$Czech Technical University in Prague, Czech Republic\\
$^2$CRRC Zhuzhou Institute, China\\
$^3$ETH Zurich, Switzerland\\
\{svatoma1, jungpete, tothjan2, ondrej.kuzelka\}@fel.cvut.cz\\
yuyiwang920@gmail.com
}
\date{January 2023}

\begin{document}

\maketitle

\begin{abstract}
     We study the problem of generating {\em interesting} integer sequences with a combinatorial interpretation. For this we introduce a two-step approach. In the first step, we generate first-order logic sentences which define some combinatorial objects, e.g., undirected graphs, permutations, matchings etc. In the second step, we use algorithms for lifted first-order model counting to generate integer sequences that count the objects encoded by the first-order logic formulas generated in the first step. For instance, if the first-order sentence defines permutations then the generated integer sequence is the sequence of factorial numbers $n!$. We demonstrate that our approach is able to generate interesting new sequences by showing that a non-negligible fraction of the automatically generated sequences can actually be found in the Online Encyclopaedia of Integer Sequences (OEIS) while generating many other similar sequences which are not present in OEIS and which are potentially interesting. A key technical contribution of our work is the method for generation of first-order logic sentences which is able to drastically prune the space of sentences by discarding large fraction of sentences which would lead to redundant integer sequences.
\end{abstract}

\section{Introduction}\label{sec:introduction}

In this paper we are interested in integer sequences. As its name suggests, an {\em integer sequence} is a sequence of integers $a_0$, $a_1$, $a_2$, $\dots$, where $a_i \in \mathbb{Z}$ for all $i \in \mathbb{N}$. Integer sequences are fundamental mathematical objects that appear almost everywhere in mathematics, ranging from enumerative combinatorics, where they count objects with certain properties, to mathematical analysis, where they define functions by means of Taylor series, and in many other areas as well. There is even an encyclopedia of them, called Online Encyclopedia of Integer Sequences (OEIS),\footnote{\url{https://oeis.org}, for a popular account of the place of OEIS in mathematics, we also refer to the article in Quanta Magazine: \url{https://www.quantamagazine.org/neil-sloane-connoisseur-of-number-sequences-20150806/}.} whose offline predecessor was established in 1964 by Neil Sloane \citep{oeis}. It contains more than 359k integer sequences, as of January 2023. OEIS contains sequences that are of interest to professional or amateur mathematicians. 

A typical mode of use of the OEIS database is as follows. Say, you work on a combinatorial problem, counting undirected graphs on $n$ vertices that have certain property that you care about, e.g., having all vertex-degrees equal to 3. You manage to compute the numbers of these graphs for several small values of $n$ and you start wondering if someone did not study the same sequence of numbers. So you take the values you computed and insert them into the search box on the OEIS homepage and hit search. After that you receive all hits into OEIS and if you are lucky, one of them will tell you something interesting about your problem---maybe somebody has already solved it or at least computed more elements of the sequence.\footnote{For instance, for undirected graphs with all vertex degrees equal to $3$, one of the hits in OEIS would be sequence A002829: Number of trivalent (or cubic) labeled graphs with $2n$ nodes.}

How do sequences get into OEIS? Sequences that are deemed {\em interesting} are manually submitted to OEIS by users. Here, what is {\em interesting} is obviously subjective to a large extent. However, this is also a limitation of OEIS---the first person to study certain sequence will not get much help by looking it up in OEIS. Many quite natural sequences are not contained in OEIS. For instance, as observed by \citep{barvinek}, it contains sequences counting 2-regular graphs properly colored by 2 colors, but not 2-regular graphs properly colored by 3 colors. There are many similar examples of interesting sequences missing from OEIS, which might be potentially useful for some users. This is also the motivation for the work we present here in which we develop an automated method for discovering arguably {\em interesting} integer sequences.

We focus on combinatorial sequences, i.e., sequences which count objects of size $n$ that have some given property, which are the subject of interest of enumerative combinatorics \citep{stanley1986enumerative}. Examples of such combinatorial sequences include sequences counting: subsets of an $n$-element set, graphs on $n$ vertices, connected graphs on $n$ vertices, trees on $n$ vertices, permutations on $n$ elements without fixpoints etc. In particular, we focus on combinatorial sequences of structures {\em that can be described using a first-order logic sentence}. 

There are several advantages of working with combinatorial enumeration problems expressed in first-order logic. First, even though it may sometimes require some effort, the first-order logic sentences can be interpreted by the human users. For instance, the sentence $\forall x\; \neg R(x,x) \wedge \forall x \forall y\; R(x,y) \Rightarrow R(y,x)$ can be interpreted as encoding undirected graphs without loops. Second, despite the fact that counting the models of first-order logic sentences is generally intractable \citep{Lifting/FO3-intractable}, there are well-characterized non-trivial fragments of first-order logic for which counting is tractable \citep{Lifting/FO2-domain-liftable,Lifting/Skolemization,Lifting/C2-domain-liftable} with fast implementations available \citep{Lifting/FO2-domain-liftable,Lifting/FO2-cell-graphs}.\footnote{\url{https://github.com/UCLA-StarAI/Forclift}, \url{https://www.comp.nus.edu.sg/~tvanbr/software/fastwfomc.tar.gz}} This means that we are able to compute the respective combinatorial sequences fast.\footnote{Computational complexity of integer sequences that count combinatorial objects is an active research direction in enumerative combinatorics, see, e.g., \citep{pak2018complexity}.}

Our method has two stages. First, we generate first-order logic sentences from a tractable fragment. Second, we compute sequences for each of the generated sentences and filter out sentences which give rise to redundant sequences. It turns out that the first step is critical. As we demonstrate experimentally later in this paper, if we generated sentences naively, i.e., if we attempted to generate all sentences of length at most $k$ that differ syntactically, we would have to compute such huge numbers of sequences, most of them redundant, that we would never be able to get to the interesting ones. In the present paper, we therefore focus mostly on describing the sentence-generating component of our system.

The rest of this paper is structured as follows. \Cref{sec:preliminary} describes the preliminaries from first-order logic. \Cref{sec:sequenceDB} describes our approach to construct a database of sentences and the respective integer sequences, which is evaluated in \Cref{sec:experiments}.  The paper ends with related work in \Cref{sec:relatedWork} and conclusion in \Cref{sec:conclusion}.

\section{Preliminaries}\label{sec:preliminary}

We work with a function-free subset of first-order logic.
The language is defined by a finite set of \textit{constants} $\Delta$, a finite set of \textit{variables} $\mathcal{V}$ and a finite set of \textit{predicates} $\mathcal{P}$.
An \textit{atom} has the form $P(t_1, \ldots, t_k)$ where $P \in \mathcal{P}$ and $t_i \in \Delta$ $\cup$ $\mathcal{V}$.
A \textit{literal} is an atom or its negation.
A \textit{formula} is an atom and a literal.
More complex formulas may be formed from existing formulas by logical connectives, or by surrounding them with a universal ($\forall x$) or an existential ($\exists x$) quantifier where $x \in \mathcal{V}$.
A variable $x$ in a formula is called $free$ if the formula contains no quantification over $x$.
A formula is called a $sentence$ if it contains no free variables.
A formula is called \textit{ground} if it contains no variables.

As is customary in computer science, we adopt the \textit{Herbrand semantics} \citep{Logic/Herbrand} with a finite domain.
We use \hb{} to denote the \textit{Herbrand base}, i.e., the set all ground atoms.
We use $\omega$ to denote a \textit{possible world}, i.e., any subset of \hb.
Elements of a possible world are assumed to be true, all others are assumed to be false.
A possible world $\omega$ is a model of a sentence $\phi$ (denoted by $\omega \models \phi$) if the sentence is satisfied in $\omega$.

\subsection{Weighted First-Order Model Counting}
To compute the combinatorial integer sequences, we make use of the weighted first-order model counting (WFOMC) problem \citep{Lifting/WFOMC}.

\begin{definition}{(Weighted First-Order Model Counting)}
\label{def:wfomc}
Let $\phi$ be a sentence over some relational language $\mathcal{L}$.
Let \hb{} denote the Hebrand base of $\mathcal{L}$ over some domain of size $n\in\Nat$.
Let $\mathcal{P}$ be the set of the predicates of the language $\mathcal{L}$ and
let $\mathsf{pred}: \hb \mapsto \mathcal{P}$ map each atom to its corresponding predicate symbol.
Let $w: \mathcal{P} \mapsto \mathds{R}$ and $\overline{w}: \mathcal{P} \mapsto \mathds{R}$ be a pair of \textit{weightings} assigning a \textit{positive} and a \textit{negative} weight to each predicate in $\mathcal{L}$.
We define
\begin{align*} \wfomc(\phi, n, w, \overline{w}) = \sum_{\omega \subseteq \hb:\omega \models \phi} \prod_{l \in \omega} w(\mathsf{pred}(l)) \prod_{l \in \hb \setminus \omega} \overline{w}(\mathsf{pred}(l)).
\end{align*}
\end{definition}

\begin{example}
Consider the sentence $$\phi = \forall x\; \neg \edge(x,x)$$
and the weights $w(\edge) = \barw(\edge) = 1$.
Since all the weights are unitary, we simply count the number of models of $\phi$.
We can interpret the sentence as follows:
Each constant of the language is a vertex.
Each atom $\edge(A,B)\in\hb$ with $A,B\in\Delta$ denotes an edge from $A$ to $B$.
Furthermore, the sentence prohibits reflexive atoms, i.e, loops.
Overall, the models of $\phi$ will be all directed graphs without loops on $n$ vertices.
Hence, we obtain $$\wfomc(\phi, n, w, \barw) = 2^{n^2-n}.$$
\end{example}
\begin{example}
Consider the sentence $$\phi = \exists x\; \heads(x)$$
and the weights $w(\heads) = 4, \barw(\heads) = 1$.
Now, we can consider each domain element to be the result of a coin flip.
The sentence requires that there is at least one coin flip with the value of ``heads'' (there exists a constant $A\in\Delta$ such that $\heads(A)$ is an element of the model).
Suppose we have $i>0$ ``heads'' in the model.
Then, the model's weight will be $4^i\cdot 1^{n-i}=4^i$ and there will be $\binom{n}{i}$ such models.
Therefore, $$\wfomc(\phi, n, w, \barw) = \sum_{i=1}^{n}4^i\cdot\binom{n}{i} = 5^n - 1.$$
\end{example}

\subsection{WFOMC in the Two-Variable Fragment}
In order to make our calculations tractable, we limit the number of variables in each sentence to at most two.
Such language is known as \fotwo{} and it allows computing WFOMC in time polynomial in the domain size 
\citep{Lifting/FO2-domain-liftable,Lifting/Skolemization}.
We provide a brief overview of that tractability result.

When computing WFOMC in a lifted manner, we seek to avoid grounding the problem as much as possible.
Grounding first-order sentences often exponentially enlarges the problem and inherently leads to many symmetrical subproblems.
\begin{example}
\label{ex:grounding}
Consider the sentence $$\phi = \forall x\; (\sm(x) \Rightarrow Cancer(x)).$$
Grounding the sentence on the domain of size $n\in\Nat$ will produce a conjunction of $n$ implications.
Each of those implications will have three models with atoms completely different from the atoms in models of the other implications.
Moreover, there will be bijections between the models of different implications.
Overall, we could have computed the model count in a much simpler way.
For one particular constant, there will be three distinct models.
Since there are $n$ constants, the final model count will be $3^n$.
\end{example}

\noindent To avoid repeating the computations on such symmetrical instances, we aim to decompose the WFOMC problem into mutually independent parts with each needed to be solved only once.
{\em Cells} of a logical sentence whose WFOMC is to be computed allow such decomposition.
\begin{definition}[Cell]
A cell of an \fotwo{} formula $\phi$ is a maximal consistent 
conjunction of literals formed from atoms in $\phi$ using only
a single variable.
\end{definition}
\begin{example}
Consider the formula $$\phi = \sm(x) \wedge \fr(x, y) \Rightarrow \sm(y).$$ 
Then there are four cells:
\begin{align*}
C_1(x) &= \sm(x) \wedge \fr(x,x), \\
C_2(x) &= \neg \sm(x) \wedge \fr(x,x),\\
C_3(x) &= \neg \sm(x) \wedge \neg \fr(x,x),\\
C_4(x) &= \sm(x) \wedge \neg \fr(x,x).
\end{align*}
\end{example}

\noindent To simplify the WFOMC computation, we condition on cells in the following way:
\begin{align*}
    \psi_{ij}(x,y) &= \phi(x,y) \wedge \phi(y,x) \wedge C_i(x) \wedge C_j(y), \\
    \psi_{k}(x) &= \phi(x,x) \wedge C_k(x).
\end{align*}
And we compute
\begin{align*}
    r_{ij} &= \wmc(\psi_{ij}(A, B), w', \overline{w}'),\\
    w_{k} &= \wmc(\psi_{k}(A), w, \overline{w}),
\end{align*}
where WMC is simply the propositional version of WFOMC, $A,B\in\Delta$ and the weights $(w',\overline{w}')$ are the same as $(w,\overline{w})$ except for the atoms appearing in the cells conditioned on.
Those weights are set to one, since the weights of the unary and binary reflexive atoms are already accounted for in the $w_k$ terms.
Note that $r_{ij} = r_{ji}$.

Now, assuming there are $p$ distinct cells, we can write
\begin{align}
\label{eq:2wfomc}
\wfomc(\phi, n, w, \barw) = \sum_{\Mat{k}\in\Nat^p:|\Mat{k}|=n} \binom{n}{\Mat{k}}&\prod_{i,j\in[p]:i<j}r_{ij}^{(\Mat{k})_i(\Mat{k})_j}\prod_{i\in[p]}r_{ii}^{\binom{(\Mat{k})_i}{2}}w_i^{(\Mat{k})_i}.
\end{align}

However, the approach above is only applicable for universally quantified \fotwo{} sentences.
To get rid of existential quantifiers in the input formula, we can utilize specialized \textit{Skolemization} for WFOMC \citep{Lifting/Skolemization}.
The procedure eliminates existential quantifiers by introducing new (fresh) \textit{Skolem} predicates $Sk$ with $w(Sk)=1$ and $\barw(Sk)=-1$.

\begin{example}
\label{ex:skolem}
    Consider formulas 
    \begin{align*}
        \phi &= \forall x \exists y\; \edge(x, y),\\
        \psi &= \forall x \forall y\; \neg\edge(x, y) \vee Sk(x).
    \end{align*}
    It holds that 
    \begin{equation*}
     \wfomc(\phi, n, w, \barw) = \wfomc(\psi, n, w', \barw'),
    \end{equation*}
    where
    \begin{equation*}
        \begin{aligned}[c]
            w(\edge) &= w'(\edge),\\
            \barw(\edge) &= \barw'(\edge),
        \end{aligned}
        \qquad
        \begin{aligned}[c]
            w'(Sk)&=1,\\
            \barw'(Sk)&=-1.      
        \end{aligned}
    \end{equation*}
We refer the readers to {\em \citep{Lifting/Skolemization}} for justification.
\end{example}

\noindent Due to \Cref{eq:2wfomc} combined with the specialized Skolemization procedure, WFOMC can be evaluated in time polynomial in $n$ for any \fotwo{} sentence.

\subsection{WFOMC in the Two-Variable Fragment with Counting Quantifiers}
Although the language of \fotwo{} permits a polynomial-time WFOMC computation, its expressive power is naturally quite limited.
The search for a larger logical fragments still permitting a polynomial complexity is a subject of active research.
One possibility to extend the \fotwo{} while preserving its tractable property is by adding \textit{counting quantifiers}.
Such language is known as \ctwo{} and its tractability was shown by \cite{Lifting/C2-domain-liftable}.

Counting quantifiers are a generalization of the traditional existential quantifier.
For a variable $x \in \mathcal{V}$, we allow usage of a quantifier of the form $\exists^{=k}x$, where $k \in \Nat$.%
\footnote{\cite{Lifting/C2-domain-liftable} actually proved the tractability for a more general version of the counting quantifiers, i.e., $\exists^{\bowtie k}x$, where $\bowtie \in \Set{<,\leq, =, \geq,>}$. However, the counting with inequalities does not scale very well---in fact, even the equalities turn out to be computationally challenging---so we only work with equality in our counting quantifiers.}
Satisfaction of formulas with counting quantifiers is defined naturally.
For example, $\exists^{=k}x\; \psi(x)$ is satisfied in $\omega$ if there are exactly $k$ constants $\Set{A_1, A_2, \ldots, A_k} \subseteq \Delta$ such that $\omega \models \psi(A_i)$ if and only if $1\leq i\leq k$.

To handle counting quantifiers, \cite{Lifting/C2-domain-liftable} suggested evaluating WFOMC repeatedly on many {\em points} (domains). 
The values would be subsequently used in a polynomial interpolation.
Instead, we work with symbolic weights, which is, for all our purposes, equivalent to the polynomial interpolation.
We simply obtain the would-be-interpolated polynomial directly.%
\footnote{In \Cref{def:wfomc}, we defined WFOMC for real-valued weights only.
However, the extension to (multivariate) polynomials is natural and does not break anything.}

When we have a sentence with counting quantifiers whose WFOMC is to be computed, first thing to do is to convert the counting quantifiers to the traditional existential quantifier.
That can be achieved using yet another syntactic construct known as {\em cardinality constraints}.
We allow the formula to contain an {\em atomic formula} of the form $(|P| = k)$,%
\footnote{Similarly to counting quantifiers, cardinality constraints can be generalized to $(|P| \bowtie k)$ with $\bowtie\;\in\{<,\leq,=,\geq\,>\}$. See \citep{Lifting/C2-domain-liftable} for the full treatment.}
where $P\in\mathcal{P}$ is a predicate and $k\in\Nat$.
Intuitively speaking, a cardinality constraint enforces that all models of a sentence contain exactly $k$ atoms with the predicate~$P$.

\begin{example}
    \label{ex:ccs}
    Consider the sentences 
        \begin{align*}
            \phi &= \forall x \exists^{=1} y\; \edge(x, y),\\
            \phi' &= (\forall x \exists y\; \edge(x, y)) \land (|\edge| = n).
        \end{align*}
    Then it holds that
    $$\wfomc(\phi, n, w, \barw) = \wfomc(\phi', n, w, \barw)$$
    for any weights $(w, \barw)$.
\end{example}

\noindent Using transformations such as the one shown in \Cref{ex:ccs}, WFOMC of a \ctwo{} sentence $\phi$ can be reduced to WFOMC of the sentence 
$$\phi' = \psi \land \bigwedge_{i=1}^m (|P_i|=k_i),$$ 
where $\psi$ is an \fotwo{} sentence.
Then, for each cardinality constraint $(|P_i| = k_i)$, we define $w'(P_i) = x_i$, where $x_i$ is a new symbolic variable.
For predicates $Q\in\mathcal{P}$, which do not occur in any cardinality constraint, we leave the positive weight unchanged, i.e., $w'(Q)=w(Q)$.

Finally, we are ready to compute $\wfomc(\psi, n, w', \barw)$.
The result will be a multivariate polynomial over the symbolic variables introduced for each cardinality constraint.
However, only one of its monomials will carry the information about the actual WFOMC of the original \ctwo{} sentence.\footnote{When using counting quantifiers $\exists^{=k}$ with $k > 0$, we also need to take care of overcounting, which is described in detail in \citep{Lifting/C2-domain-liftable}.}
Namely, the monomial $$A\cdot\prod_{i=1}^m x_i^{e_i}$$ such that $e_i = k_i$ for each cardinality constraint $(|P_i| = k_i)$.

Now, we can report the final WFOMC result of the original \ctwo{} sentence $\phi$.
Nevertheless, we must still account for the positive weights that were replaced by symbolic variables when dealing with cardinality constraints.
Hence,
$$\wfomc(\phi, n, w, \barw) = A \cdot \prod_{i=1}^m w(P_i)^{k_i}.$$

\begin{example}\label{example:counting}
Consider the sentence $$\phi = \forall x \exists^{=1} y\; \edge(x, y),$$ domain of size $n=5$ and $w(\edge)=\barw(\edge)=1$.
Let us compute $\wfomc(\phi, n, w, \barw)$.

First, we get rid of the counting quantifier:
$$\phi' = (\forall x \exists y\; \edge(x, y)) \land (|\edge|=5)$$
Second, we introduce a symbolic variable $x$ as the positive weight of the $\edge$ predicate:
$$w'(\edge)=x$$
Finally, we evaluate $\wfomc(\forall x \exists y\; \edge(x, y), n, w', \barw)$ and extract the coefficient of the term where $x$ is raised to the fifth power:
$$\wfomc(\phi, n, w, \barw) = 3125.$$

Let us check the obtained result.
We can interpret the formula $\phi$ as a directed graph with each vertex having exactly one outgoing edge.
For each vertex, there are $n$ vertices that it could be connected to. Hence, there are $n^n$ such graphs.
For $n=5$, we obtain $5^5=3125$.
\end{example}

\section{Constructing the Sequence Database}\label{sec:sequenceDB}

Our aim is to build a database consisting of first-order logic sentences and the respective integer sequences that are generated by these sentences. We do not want the database to be exhaustive in terms of sentences. For any integer sequence, there may be many sentences that generate it\footnote{Trivially, if we allowed arbitrary predicate names, we could even say that there are infinitely many sentences that generate the same sequence.} and we want only one sentence per integer sequence. If there are multiple sentences that generate the same integer sequence, we call them {\em redundant}. We generally try to avoid generating redundant sentences. 

The database is constructed in two steps. In the first step, we generate first-order logic sentences and in the second step we compute the integer sequences that count the models of these sentences. In this section, we describe these two steps in the reverse order. First we describe how the sequences, which we will call {\em combinatorial spectra}, are computed from first-order logic sentences, which can be done using existing lifted inference methods. Then we describe our novel sentence-generation algorithm which strives to generate as few redundant sentences as possible.

\subsection{Computing the Integer Sequences}

Given a first-order logic sentence, we need to compute a number sequence such that its $k$-th member is the model count of a relational sentence on the domain of size $k$.
The set of domain sizes, for which the sequence member would be non-zero is called a \textit{spectrum} of the sentence.
{\em Spectrum} of a logical sentence $\phi$ is the set of natural numbers occurring as size of some finite model of $\phi$ \citep{borger2001classical}.
Since the sequence that we seek builds, in some sense, on top of the spectrum, and since the sequence can also be described as the result of the combinatorial interpretation of the original sentence, we dub the sequence \textit{combinatorial spectrum} of the sentence.

\begin{definition}[Combinatorial Spectrum]
    Combinatorial spectrum of a logical sentence $\phi$, denoted as $\spectrum(\phi)$, is a sequence of model counts of $\phi$ on finite domains of sizes taking on values $1,2,3,4,\dots$
\end{definition}

\begin{example}
    Consider again the sentence $$\phi = \exists x\; \heads(x).$$
    Then, all subsets of \hb{} are a model of $\phi$ except for the empty set.
    Hence, for a domain of size $n$, there will be $2^n-1$ models, i.e,
    \begin{equation*}
    \spectrum(\phi) = 1,3,7,15,\ldots
    \end{equation*}
    
\end{example}

Combinatorial spectra can be computed using a WFOMC algorithm. In this work we use our implementation of the algorithm from \citep{Lifting/FO2-cell-graphs}, which is a state-of-the-art algorithm running in time polynomial in $n$ for \fotwo{}. We use it together with our implementation of the reductions from \citep{Lifting/C2-domain-liftable} which allow us to compute spectra of any \ctwo{} sentence in time polynomial in $n$.

\subsection{Generating the First-Order Logic Sentences}\label{sec:sfinder}

In general, we aim to generate sentences that have the following syntactic form:\footnote{There are at least two Scott normal forms for \ctwo{} appearing in the literature \citep{gradel1999logics,pratt2009data}, which would allow us to use less quantifier prefixes. However, these normal forms were not designed for combinatorial counting---they were designed only to guarantee equisatisfiability of \ctwo{} sentences and their normal forms and they do not guarantee {\em combinatorial equivalence}. That is why they would not be directly useful for us in this paper.} 
\begin{align}
\bigwedge_{\begin{array}{c} Q_1 \in \{ \forall, \exists, \exists^{=1}, \dots, \exists^{=K} \},\\ Q_2 \in \{ \forall, \exists, \exists^{=1}, \dots, \exists^{=K} \} \end{array}} \bigwedge_{i=1}^{M} Q_1 x Q_2 y \; \Phi^{Q1,Q2}_i(x,y) \wedge \notag \\ 
\bigwedge_{Q \in \{ \forall, \exists, \exists^{=1}, \dots, \exists^{=K} \}} \bigwedge_{i=1}^{M'} Q x  \; \Phi^{Q}_i(x)\label{eq:sentenceForm}
\end{align}

\noindent where each $\Phi^{Q1,Q2}_i(x,y)$ is a quantifier-free disjunction of literals containing only the logical variables $x$ and $y$ and, similarly, each $\Phi^{Q}_i(x)$ is a quantifier-free disjunction of literals containing only the logical variable $x$. The integers $K$, $M$ and $M'$ are parameters.

Examples of sentences that have the form (\ref{eq:sentenceForm}) are:

\begin{itemize}
    \item $\forall x \exists^{=1} y\; R(x,y) \wedge \forall x \exists^{=1} y\; R(y,x)$, 
    \item $\forall x\;  \neg R(x,x) \wedge \forall x \forall y\; \neg R(x,y) \vee R(y,x)$.
\end{itemize}

\noindent Here the first sentence defines bijections (i.e., permutations) and the second sentence defines undirected graphs without loops. 

\begin{note}
We will slightly abuse terminology and use the term {\em clause} for the quantified disjunctions of the form $Q_1 x Q_2 y \; \Phi^{Q1,Q2}_i(x,y)$ and $Q x  \; \Phi^{Q}_i(x)$, even though the term {\em clause} is normally reserved only for universally quantified disjunctions.
\end{note}

\subsubsection{Do We Cover All of \ctwo{}?}

A natural question to ask is: Do we get all possible combinatorial spectra of \ctwo{} sentences if we restrict ourselves to sentences in the form of (\ref{eq:sentenceForm})? The answer seems to be negative, as we explain next, but it hardly matters in our opinion because the task that we set for ourselves in this paper is not to generate all combinatorial sequences of \ctwo{} sentences---this would not be feasible anyways because the number of different integer sequences generated as spectra of \ctwo{} sentences is infinite.\footnote{This is easy to see. In fact, even \fotwo{} sentences generate infinitely many integer sequences. Take, for instance the sentences $\varphi_k$ of the form $\varphi_k = \exists x\; \bigvee_{i=1}^k U_i(x)$. Their combinatorial spectra are $\spectrum{(\varphi_k)} = \left((2^k)^n-1\right)_{n=1}^\infty$. Hence, we have infinitely many combinatorial spectra even for these simple sentences---one for each $k \in \mathbb{N}$.} Instead, what we want to achieve is to generate as many {\em interesting} integer sequences as possible within a limited time budget.

Now we briefly explain why sentences of the form (\ref{eq:sentenceForm}) do not guarantee that we would be able to find all \ctwo{} combinatorial spectra. First of all, we cannot rely on normal forms from \citep{gradel1999logics,pratt2009data} because those were not designed to preserve model counts. While the transformation presented in \citep{Lifting/C2-domain-liftable} allows one to reduce the computation of model counts of any \ctwo{} sentence to a computation with sentences that are in the form of (\ref{eq:sentenceForm}), it requires some of the predicates to have negative weights. We do not allow negative weights in the generated sentences because they make the post-hoc combinatorial explanation of the sentences significantly more difficult.

\subsubsection{Traversing the Sentence Space}

We use a standard breadth-first search algorithm to traverse the space of \ctwo{} sentences. 
The algorithm starts with the empty sentence. In each layer of the search tree it generates all possible sentences that can be obtained by adding a literal to one of the sentences generated in the previous layer. The literal may be added into an existing clause or it can be added to the sentence as a new clause, in which case it also needs to be prefixed with quantifiers.

\begin{example}
Suppose we have the sentence $\varphi = \forall x \exists y \; R(x,y)$, which we want to extend. Suppose also that the only predicate in our language is the binary predicate $R$ and that the only allowed quantifiers are $\forall$ and $\exists$ (for simplicity). 
To extend $\varphi$, the first option we have is to add a new $R$-literal to the clause $\forall x \exists y \; R(x,y)$. There are 8 ways to do this resulting in the following sentences: $\varphi_1 = \forall x \exists y \; (R(x,y) \vee R(x,x))$, $\varphi_2 = \forall x \exists y \; (R(x,y) \vee R(x,y))$, $\varphi_3 = \forall x \exists y \; (R(x,y) \vee R(y,x))$, $\varphi_4 = \forall x \exists y \; (R(x,y) \vee R(y,y))$, $\varphi_5 = \forall x \exists y \; (R(x,y) \vee \neg R(x,x))$, $\varphi_6 = \forall x \exists y \; (R(x,y) \vee \neg R(x,y))$, $\varphi_7 = \forall x \exists y \; (R(x,y) \vee \neg R(y,x))$, $\varphi_8 = \forall x \exists y \; (R(x,y) \vee \neg R(y,y))$. 
The second option is to create a new single-literal clause and add it to $\varphi$. In this case we have the following: $\varphi_9 = \forall x \exists y \; R(x,y) \wedge \forall x\; R(x,x)$, $\varphi_{10} = \forall x \exists y \; R(x,y) \wedge \exists x\; R(x,x)$, $\varphi_{11} = \forall x \exists y \; R(x,y) \wedge \forall x\; \neg R(x,x)$, $\varphi_{12} = \forall x \exists y \; R(x,y) \wedge \forall x\; \neg R(x,x)$, and then sentences of one of the following types: $\forall x \exists y \; R(x,y) \wedge Q_1 x Q_2 y \; R(x,y)$, $\forall x \exists y \; R(x,y) \wedge Q_1 x Q_2 y \; R(y,x)$, $\forall x \exists y \; R(x,y) \wedge Q_1 x Q_2 y \; \neg R(x,y)$, and $\forall x \exists y \; R(x,y) \wedge Q_1 x Q_2 y \; \neg R(y,x)$ where $Q_1,Q_2 \in \{\forall,\exists\}$.
\end{example}

    \noindent As can be seen from this example, the branching factor is large even when the first-order language of the sentences contains just one binary predicate. However, if we actually computed the combinatorial spectra of these sentences, we would see that many of them are {\em redundant} (we give a precise definition of this term in the next subsection). Furthermore, if we were able to detect which sentences are {\em redundant} without computing their spectra, we would save a significant amount of time. This is because computation of combinatorial spectra, even though polynomial in $n$, is still computationally expensive. Moreover, if we were able to remove some sentences from the search, while guaranteeing that all {\em non-redundant} sentences would still be generated, we would save even more time. In the remainder of this section, we describe such techniques---either techniques that mark sentences as just {\em redundant}, in which case we will not compute their combinatorial spectra, or also as {\em safe-to-delete}, in which case we will not even use them to generate new sentences. We will use the term {\em not-safe-to-delete} when we want to refer to sentences which are {\em redundant} but not {\em safe-to-delete}.

\subsubsection{What Does It Mean That a Sequence Is Redundant?}

Given a collection of sentences $\mathcal{S}$, a sentence $\varphi \in \mathcal{S}$ is considered {\em redundant} if there is another sentence $\varphi' \in \mathcal{S}$ and $\spectrum(\varphi) = \spectrum(\varphi')$, i.e., if the other sentence generates the same integer sequence. Since checking whether two sentences have the same combinatorial spectrum is computationally hard (we give details in the Appendix), we will only search for sufficient conditions for when two sentences generate the same spectrum.

Apart from the above notion of redundancy, we also consider a sentence $\varphi \in \mathcal{S}$ redundant if there are two other sentences $\varphi'$, $\varphi''$ such that $\spectrum(\varphi) = \spectrum(\varphi') \cdot \spectrum(\varphi'')$, where the product $\cdot$ is taken element-wise. The rationale is that when this happens, the set of models of $\varphi$ likely corresponds to the elements of the Cartesian product of the models of $\varphi'$ and $\varphi''$ (or, at least, there is a bijection between them), which is not combinatorially very interesting.\footnote{After all, one can always create such sequences in a post-processing step and interpret them as elements of the respective Cartesian products if one so desires.} 

\subsubsection{Detecting Redundant Sentences}

Now that we explained what we mean by {\em redundant sentences}, we can move on to methods for detecting whether a sentence is redundant and if it is then whether it is also {\em safe-to-delete}. We stress upfront that the methods described in this section will not guarantee detecting all redundancies. On the other hand, these methods will be sound---they will not mark non-redundant sentences as redundant. Some of the techniques will mark a sentence as redundant but they will not give us a witness for the redundancy, i.e., other sentences with the same combinatorial spectrum. This will be the case for techniques that guarantee that the witness is a shorter sentence (in the number of literals), which must have been generated earlier, thus, we will know that by pruning the longer redundant sentences, we will not affect completeness of the search.

\begin{table*}[p]
\caption{Pruning techniques used in the algorithm for generation of non-redundant sentences.}\label{tab:pruning}
\begin{tabular}{lp{10.7cm}
}
   \bf Pruning Technique Name  &  \bf Short Description of the Idea 
   \\ \hline \hline
   \multicolumn{2}{l}{
   \begin{minipage}{0.9\linewidth}
   \vspace{0.1cm}
   {\em Sentences detected using the techniques described below are \emph{safe-to-delete}.}
   \end{minipage}}
   \\ \hline
    \em Isomorphic Sentences & Two sentences are isomorphic if one can be obtained from the other by renaming variables and predicate names. {\em See Appendix for details}. 
    \\ \hline
    \em Decomposable Sentences & If a sentence $\varphi$ can be written as a conjunction $\varphi = \varphi' \wedge \varphi''$ of two conjunctions with disjoint sets of predicates then $\varphi$ is redundant. {\em See the main text for a justification}. 
    \\ \hline
    \em Tautologies \& Contradictions & Any sentence which contains an always-true (i.e., tautological) clause is redundant---there exists a shorter sentence with the same combinatorial spectrum. All unsatisfiable sentences (contradictions) produce the same combinatorial spectrum consisting of zeros and, hence, are also redundant. 
    \\ \hline
    \em Negations & If two sentences can be made isomorphic by negating all occurrences of literals of some predicates, then they generate the same combinatorial spectra. For example, $\spectrum(\forall x\exists y\; R(x,y)) = \spectrum(\forall x\exists y\; \neg R(x,y))$.
    \\ \hline
    \em Permuting Arguments & {\em Argument-flip} on a predicate $R$ is a transformation which replaces all occurrences of $R(x,y)$ by $R(y,x)$ and all occurrences of $R(y,x)$ by $R(x,y)$. If two sentences $\varphi$ and $\varphi'$ can be made isomorphic using argument flips, then they generate the same combinatorial spectra, i.e., $\spectrum(\varphi) = \spectrum(\varphi')$.
    \\ \hline \hline
    \multicolumn{2}{l}{\begin{minipage}{0.9\linewidth}
    \vspace{0.1cm}
    {\em Sentences detected by techniques described below are \emph{not-safe-to-delete}. We do not compute combinatorial spectra for those sentences and do not store them in the database.}
    \end{minipage}}
    \\ \hline
    \em Trivial Constraints & Suppose a sentence $\varphi$ contains a clause of the form $\forall x \; U(x)$ or $\forall x \forall y \; R(x,y)$, which we call a {\em trivial constraint}. Then the sentence $\varphi'$ obtained from $\varphi$ by dropping the trivial constraint and replacing all occurrences of $U$ or $R$, respectively by {\em true}, has the same combinatorial spectrum as $\varphi$. 
    \\ \hline
    \em Reflexive Atoms & If a binary literal $R$ appears in a sentence $\varphi$ only as $R(x,x)$, $\neg R(x,x)$, $R(y,y)$ or $\neg R(y,y)$ and $\varphi$ has at least two literals, then the sentence is redundant. {\em See the main text for a justification}. 
    \\ \hline
    \em Subsumption & If a sentence $\varphi$ contains two clauses $Q_1 x Q_2 y\; \alpha(x,y)$ and $Q_1 x Q_2 y\; \beta(x,y)$, with the same quantifier prefix, and if there is a substitution $\theta : \{x,y \} \rightarrow \{x,y\}$ such that $\alpha \theta \subseteq \beta$, then $\varphi$ is redundant---the sentence $\varphi'$ obtained from $\varphi$ by dropping $Q_1 x Q_2 y\; \beta(x,y)$ generates the same combinatorial spectrum. 
    \\ \hline
    \em Cell Graph Isomorphism & {\em See the main text}. 
    \\ \hline
\end{tabular}
\end{table*}

The pruning methods that are used by our algorithm for generation of sentences are summarized in Table \ref{tab:pruning}. Some of them are rather straightforward and do not require much further justification here. 

The method called {\em Isomorphic Sentences} is a straightforward extension of the methods for enumeration of non-isomorphic patterns, known from data mining literature (see, e.g., \citep{farmr}), where the main difference is that when checking isomorphism, we allow renaming of predicates (the details are described in the Appendix for completeness). 

The method called {\em Decomposable Sentences} is based on the following observation, which is well-known among others in lifted inference literature \citep{Lifting/FO2-domain-liftable}: Let $\varphi = \varphi_1 \wedge \varphi_2$ be a first-order logic sentence. If $\varphi_1$ and $\varphi_2$ use disjoint sets of predicates then it is not hard to show that $\spectrum(\varphi) = \spectrum(\varphi_1) \cdot \spectrum(\varphi_2)$, where the product is taken element-wise and $\spectrum(\varphi_1)$ and $\spectrum(\varphi_2)$ are understood to be computed only over the languages consisting of the predicates contained in $\varphi_1$ and $\varphi_2$, respectively. 

While the method called {\em Permuting Arguments} may not need a more detailed explanation per se, we will still illustrate it here on an example to provide a better intuition. Suppose that we have two sentences: $\varphi_1 = \forall x \exists y\; E(x,y)$ and $\varphi_2 = \forall x \exists y \; E(y,x)$. The first one can be interpreted as modelling directed graphs in which no vertex has out-degree $0$ and the second one as modelling directed graphs in which no vertex has in-degree 0. This interpretation was based on our decision to interpret $E(x,y)$ as an edge form $x$ to $y$, yet we could have also interpreted it as an edge from $y$ to $x$ and this would change nothing about the combinatorial spectrum of the sentence (which does not depend on how we interpret the sentence). If we generalize this observation, we realize that sentences that differ only in the order of arguments of some predicates (like $\varphi_1$ and $\varphi_2$ above) must generate the same combinatorial spectrum.

Next, we give little more detail on the method called {\em Reflexive Atoms}. If a sentence $\varphi$ contains atoms of some binary predicate $R$ only in the form $R(x,x)$ or $R(y,y)$ then all the ground atoms $R(i,j)$, where $i$ and $j$ are domain elements and $i \neq j$, are unconstrained by $\varphi$. It follows that $\spectrum(\varphi) = \spectrum(\varphi') \cdot \spectrum(\varphi'')$ where $\varphi' = \forall x\; \neg R(x,x)$ and $\varphi''$ is a sentence obtained by replacing all occurrences of $R(x,x)$ by $U_R(x)$ and occurrences of $R(y,y)$ by $U_R(y)$ where $U_R$ is a fresh predicate. Here, $\varphi'$ accounts for all possible configurations of the atoms $R(i,j)$ with arguments $i \neq j$. It follows that such a sentence $\varphi$ is redundant.

The methods {\em Tautologies \& Contradictions}, {\em Negations}, {\em Trivial Constraints} and {\em Subsumption}, do not need any further explanation beyond what is in Table \ref{tab:pruning}.
The only remaining method, namely {\em Cell Graph Isomorphism}, is described in the next subsection.

\subsubsection{Cell Graph Isomorphism}
The final pruning method of \Cref{tab:pruning} called {\em Cell Graph Isomorphism} relies on a concept from the area of lifted inference, originally intended for a more efficient evaluation of \Cref{eq:2wfomc}.
In \citep{Lifting/FO2-cell-graphs}, the authors introduced a special structure called a {\em cell graph} to help them compute WFOMC faster.

\begin{definition}[Cell Graph]
    A cell graph $G_\phi$ of a sentence $\phi$ is a complete graph $(V, E)$ such that
    \begin{enumerate}
        \item $V$ is the set of cell labels $\Set{1,2,\ldots,p}$,
        \item each node $i \in V$ has a label $w_i$,
        \item each edge, including loops, from node $i$ to $j$ has a label $r_{ij}$.
    \end{enumerate}
\end{definition}

\noindent As one can observe from \Cref{eq:2wfomc}, the WFOMC computation is fully determined by the terms $r_{ij}$ and $w_{k}$.
That remains unchanged even with the counting quantifiers, since then, only the symbolic result of \Cref{eq:2wfomc} is further searched for particular monomials.
Hence, the computation is fully determined by a cell graph, which contains all the $r_{ij}$ and $w_{k}$ values.

Building on that observation, we propose a pruning technique based on two cell graphs being isomorphic.
If cell graphs of two sentences are isomorphic, then their WFOMC results will be the same, and consequently, their combinatorial spectra will be the same as well.
We formalize those claims below.

We start by discussing the simpler case where all weights are real-valued.
That is enough to apply this pruning method to \fotwo{} sentences, and then we extend it to the case with symbolic weights, which is needed for correct handling of sentences from \ctwo{}.

First we define what we mean by cell graph isomorphism.

\begin{definition}[Cell Graph Isomorphism, for graphs with real-valued weights]  
    Let $G$ and $G'$ be two cell graphs where each edge $\{i,j\} \in E(G)$ ($\{i',j'\} \in E(G')$, respectively) is labeled by a real-valued weight $r_{ij}$ ($r_{i'j'}'$, respectively) and each vertex $i \in V(G)$ ($i' \in V(G')$) is labeled by a real number $w_i$ ($w_{i'}'$). We say that $G$ and $G'$ are isomorphic if there exists a bijection $f : V(G) \rightarrow V(G')$ such that $w_i' = w_{f(i)}$ and $r_{ij}' = r_{f(i),f(j)}$ for all $i,j \in V(G)$.
\end{definition}

\noindent In order to exploit the isomorphism, we will exploit the following interesting property of C\ref{eq:2wfomc}.

\begin{remark}\label{remark:key}
Let $p$ be again the number of cells in \Cref{eq:2wfomc} and let $$f : [p] \rightarrow [p]$$ be a bijection. Then the following equality holds:
\begin{align*}
    \sum_{\Mat{k}\in\Nat^p:|\Mat{k}|=n} \binom{n}{\Mat{k}}&\prod_{i,j\in[p]:i<j}r_{ij}^{(\Mat{k})_i(\Mat{k})_j}\prod_{i\in[p]}r_{ii}^{\binom{(\Mat{k})_i}{2}}w_i^{(\Mat{k})_i} = \\
    \sum_{\Mat{k}\in\Nat^p:|\Mat{k}|=n} \binom{n}{\Mat{k}}&\prod_{i,j\in[p]:i<j}r_{f(i),f(j)}^{(\Mat{k})_i(\Mat{k})_j}\prod_{i\in[p]}r_{f(i),f(i)}^{\binom{(\Mat{k})_i}{2}}w_{f(i)}^{(\Mat{k})_i}.
\end{align*}
In other words, permuting the cells, while preserving the structure of the weights, does not change the resulting value.
\end{remark}

Next we state the result which will justify using the cell graph isomorphism method for \fotwo{} sentences.

\begin{theorem}\label{thm:first}
Let $\phi$ and $\psi$ be two \fotwo{} sentences with weights $(w, \barw)$ and $(w',\barw')$, respectively, and let $G_{\phi}$ and $G_{\psi}$ be their respective cell graphs. If $G_{\phi}$ is isomophic to $G_{\psi}$, then 
$$\wfomc(\phi,n,w,\barw)=\wfomc(\psi,n,w,\barw)$$
for any domain size $n\in\Nat$.
\end{theorem}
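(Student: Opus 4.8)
The plan is to reduce the WFOMC identity directly to the algebraic permutation statement already isolated in \Cref{remark:key}. The key observation is that, by \Cref{eq:2wfomc}, each of $\wfomc(\phi,n,w,\barw)$ and $\wfomc(\psi,n,w',\barw')$ is a fixed polynomial expression whose only inputs are the cell weights $w_i$ and the edge weights $r_{ij}$ of the respective cell graphs. In the real-valued setting of this theorem, the cell graph therefore captures \emph{all} the information that enters the formula; two sentences with the same cell graph data must yield the same sum, regardless of how the underlying sentences look syntactically.

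First I would spell out that the cell-graph isomorphism $f : V(G_\phi) \to V(G_\psi)$ gives, by definition, $w_{f(i)}' = w_i$ and $r_{f(i)f(j)}' = r_{ij}$ for all $i,j$. Writing out \Cref{eq:2wfomc} for $\psi$ and substituting these equalities, $\wfomc(\psi,n,w',\barw')$ becomes a sum over $\Mat{k}\in\Nat^p$ in which every occurrence of $w_{f(i)}'$ and $r_{f(i)f(j)}'$ is replaced by the corresponding unprimed weight of $G_\phi$. The resulting expression is exactly the right-hand side appearing in \Cref{remark:key}: a sum over multi-indices $\Mat{k}$ of the same shape as the formula for $\phi$, but with the cells relabelled through the bijection $f$.

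Next I would invoke \Cref{remark:key} verbatim to conclude that this permuted sum equals the unpermuted one, i.e.\ $\wfomc(\phi,n,w,\barw)$. Since \Cref{remark:key} holds for every bijection $f$ on $[p]$ and for every domain size $n$, and since the isomorphism supplies precisely such a bijection, the chain of equalities closes: $\wfomc(\psi,n,w',\barw') = \wfomc(\phi,n,w,\barw)$ for all $n\in\Nat$. One preliminary point worth checking is that both sentences genuinely have the same number of cells $p$ (so that the multi-index index set $\{\Mat{k}\in\Nat^p : |\Mat{k}|=n\}$ agrees); this is immediate because an isomorphism of complete graphs forces $|V(G_\phi)| = |V(G_\psi)|$.

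The argument is essentially a bookkeeping reduction, so I do not expect a deep obstacle; the only subtlety is purely notational. The reindexing in \Cref{remark:key} is stated with the summation variable $\Mat{k}$ attached to cell $i$ on the left and to cell $f(i)$ on the right, so I must be careful that, after substituting the isomorphism equalities, the exponents $(\Mat{k})_i(\Mat{k})_j$ and $\binom{(\Mat{k})_i}{2}$ line up with the correct primed weights before \Cref{remark:key} is applied. Getting this alignment exactly right---matching each factor of the substituted $\psi$-sum to the corresponding factor of the $f$-permuted $\phi$-sum---is the one place where care is needed, but it is routine rather than genuinely hard. (I note the minor caveat that the theorem statement writes $\wfomc(\psi,n,w,\barw)$ on the right, whereas consistency demands $\psi$'s own weights $(w',\barw')$; this is evidently a typo and the proof goes through with the intended weights.)
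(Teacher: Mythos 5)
Your proposal is correct and follows essentially the same route as the paper's own proof sketch: take the weight-preserving bijection $f$ supplied by the cell graph isomorphism, substitute it into \Cref{eq:2wfomc}, and invoke \Cref{remark:key} to conclude the two sums coincide for every domain size $n$. Your additional remarks---that $|V(G_\phi)| = |V(G_\psi)|$ so the multi-index sets agree, and that the theorem's right-hand side should read $\wfomc(\psi,n,w',\barw')$---are sound refinements of the same argument rather than a different approach.
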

\begin{proof}[Proof sketch]
The proof follows from the following observation: Let $f$ be the bijection $f : V(G_\phi) \rightarrow V(G_\psi)$ preserving weights, which must exist from the definition of cell graph isomorphism. Consider the equation for computing WFOMC of a sentence from its cell graph, \Cref{eq:2wfomc}.
If we apply the bijection $f$ on the cell indices ($i$'s and $j$'s from the equation), it will turn the equation for computing WFOMC of $\phi$ to the one for $\psi$ (again because $f$ is weight-preserving bijection). It follows from Remark \ref{remark:key} that these two must be the same and therefore WFOMC of $\phi$ and $\psi$ must be equal for any domain size~$n$.
\end{proof}

Next we extend the cell graph isomorphism method to \ctwo{} sentences. For that, we first need to extend the definition of cell graphs and of cell graph isomorphism.

\begin{definition}[Cell Graph with Cardinality Constraints]
    A cell graph $G_\phi$ of a sentence $\phi$ is a pair $(G,C)$ consisting of:
    \begin{enumerate}
        \item A complete graph $G = (V, E)$ such that
            \begin{enumerate}
                \item $V$ is the set of cell labels $\Set{1,2,\ldots,p}$,
                \item each node $i \in V$ has a label $w_i$,
                \item each edge, including loops, from node $i$ to $j$ has a label $r_{ij}$.
            \end{enumerate}
            Here the weights $w_i$ and $r_{ij}$ are, in general, multivariate polynomials.
        \item A set $C$ of monomials representing the cardinality constraints.
    \end{enumerate}
\end{definition}

\begin{example}
Consider the sentence 
$$\phi = (\forall x\; \neg \edge(x,x)) \land (\forall x \forall y\; \neg \edge(x, y) \vee \edge(y,x)) \land (|\edge|=10)$$ 
which models undirected graphs with 5 edges. There is only one cell which is consistent with $\phi$ for this sentence, $\neg \edge(x,x)$. As we already saw in \Cref{example:counting}, to encode the cardinality constraint $|\edge|=5$, we need to introduce the symbolic weight $w(\edge) = x$. The cell graph then consists of the graph given by the the weights $w_1 = 1$, $r_{1,1} = 1+x^2$, and of the singleton set $C = \{ x^{10} \}$, representing the cardinality constraint. 
\end{example}

Now we are ready to state the definition of cell graph isomorphism for \fotwo{} sentences with cardinality constraints (which is all we need to encode \ctwo{} sentences).

\begin{definition}[Cell Graph Isomorphism, for graphs with symbolic weights and cardinality constraints]    
Let $(G,C)$ and $(G',C')$ be two cell graphs with cardinality constraints where each edge $\{i,j\} \in E(G)$ ($\{i',j'\} \in E(G')$, respectively) is labeled by a multivariate polynomial $r_{ij}$ ($r_{i'j'}'$, respectively) and each vertex $i \in V(G)$ ($i' \in V(G')$) is labeled by a multivariate polynomial $w_i$ ($w_{i'}'$). We say that $G$ and $G'$ are isomorphic if there exists a bijection $f : V(G) \rightarrow V(G')$ and another bijection $g$ mapping variables occurring in the polynomials in $(G,C)$ to variables occurring in the polynomials in $(G',C')$ which satisfy the following conditions: 
\begin{enumerate}
    \item $w_i' = g(w_{f(i)})$,
    \item $r_{ij}' = g(r_{f(i),f(j)})$ for all $i,j \in V(G)$,
    \item $C' = g(C)$.
\end{enumerate}
\end{definition}

\noindent The above definition is more complicated than the one for cell graphs of \fotwo{} sentences because we need to make sure that when we discover an isomorphism of the cell graph, it will not ``break'' the cardinality constraints.

Finally we are ready to formally show that cell graph isomorphism can be used also for \ctwo{} sentences.

\begin{theorem}
Let $\phi$ and $\psi$ be two \ctwo{} sentences and $\phi'$ and $\psi'$ be their encoding into \fotwo{} sentences with cardinality constraints. Let $(G_{\phi'},C_{\phi'})$ and $(G_{\psi'},C_{\psi'})$ be their respective cell graphs with constraints. If $(G_{\phi'},C_{\phi'})$ is isomorphic to $(G_{\psi'},C_{\psi'})$, then 
$$\wfomc(\phi,n,w,\barw)=\wfomc(\psi,n,w,\barw)$$
for any domain size $n\in\Nat$ and any weights $(w, \barw)$.
\end{theorem}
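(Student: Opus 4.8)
The plan is to reduce the statement to Theorem~\ref{thm:first} and Remark~\ref{remark:key}, wrapping their argument in a layer that tracks the symbolic variables introduced for the cardinality constraints together with the monomial-extraction step of the $\ctwo$-to-$\fotwo$ reduction. First I would make explicit the quantity the reduction actually produces. Evaluating \Cref{eq:2wfomc} on the (now polynomial-labelled) cell graph $G_{\phi'}$ yields a multivariate polynomial $\mathcal{W}_\phi$ in the symbolic variables, and by the reduction recalled before \Cref{example:counting} we have $\wfomc(\phi,n,w,\barw)=A_\phi\cdot\prod_i w(P_i)^{k_i}$, where $A_\phi$ is the coefficient in $\mathcal{W}_\phi$ of the target monomial $\mu_\phi=\prod_i x_i^{k_i}$, the single element of $C_{\phi'}$. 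The same holds for $\psi$ with $\mathcal{W}_\psi$, $A_\psi$ and $\mu_\psi\in C_{\psi'}$.

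The core step is to show $\mathcal{W}_\psi=g(\mathcal{W}_\phi)$, where the variable bijection $g$ is extended to the induced ring isomorphism on the two polynomial rings. Substituting the isomorphism conditions $w_i'=g(w_{f(i)})$ and $r_{ij}'=g(r_{f(i),f(j)})$ into \Cref{eq:2wfomc} written for $\mathcal{W}_\psi$ and pulling $g$ outside the sum and the products (it is a homomorphism and fixes the integer coefficients $\binom{n}{\Mat{k}}$), I obtain $g$ applied to
\[
\sum_{\Mat{k}\in\Nat^p:|\Mat{k}|=n} \binom{n}{\Mat{k}}\prod_{i,j\in[p]:i<j}r_{f(i),f(j)}^{(\Mat{k})_i(\Mat{k})_j}\prod_{i\in[p]}r_{f(i),f(i)}^{\binom{(\Mat{k})_i}{2}}w_{f(i)}^{(\Mat{k})_i}.
\]
By Remark~\ref{remark:key} applied to the bijection $f$, this inner expression equals $\mathcal{W}_\phi$, so $\mathcal{W}_\psi=g(\mathcal{W}_\phi)$. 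This is precisely the symbolic re-run of the proof of Theorem~\ref{thm:first}.

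It remains to match the extracted coefficients and the weight-restoration factors. Since $g$ only renames variables, the coefficient of any monomial $\mu$ in $\mathcal{W}_\phi$ equals the coefficient of $g(\mu)$ in $g(\mathcal{W}_\phi)=\mathcal{W}_\psi$. The isomorphism condition $C'=g(C)$ states exactly that $\mu_\psi=g(\mu_\phi)$, whence $A_\psi=A_\phi$. Because $g$ preserves exponents, the multiset of bounds $\{k_i\}$ in $\mu_\phi$ is carried bijectively onto that in $\mu_\psi$, so the restoration products $\prod_i w(P_i)^{k_i}$ agree once the restored weights are transported along $g$ (in the integer-sequence setting all weights are unit and these factors are simply $1$). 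Combining the three identities gives $\wfomc(\phi,n,w,\barw)=\wfomc(\psi,n,w,\barw)$ for every $n$.

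The step I expect to be the main obstacle is this last matching: one must verify that monomial extraction genuinely commutes with the renaming $g$, and that $C'=g(C)$ is exactly the condition ruling out an isomorphism that secretly reshuffles exponents among the cardinality constraints in a way that would select a different coefficient or a mismatched restoration factor. One should also confirm that any overcounting correction required for $\exists^{=k}$ with $k>0$, being a structural function of the constraints, is preserved by the isomorphism. The earlier steps are a faithful symbolic copy of Theorem~\ref{thm:first}; it is the interaction between the polynomial algebra and the cardinality bookkeeping that needs the most care.
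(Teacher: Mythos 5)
Your proposal is correct and takes exactly the route the paper intends: the paper's entire proof is the single sentence that it is ``a straightforward extension of the proof of Theorem \ref{thm:first},'' and your argument---re-running the bijection argument of Theorem \ref{thm:first} symbolically via Remark \ref{remark:key} to get $\mathcal{W}_\psi = g(\mathcal{W}_\phi)$, then using $C' = g(C)$ to show that monomial extraction and the weight-restoration factors commute with the renaming $g$---is precisely that extension spelled out. The subtleties you flag at the end (exponent bookkeeping under $g$ and the overcounting correction for $\exists^{=k}$ with $k>0$) are genuine but are glossed over by the paper as well, so your write-up is if anything more careful than the original.
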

\begin{proof}
The proof is a straightforward extension of the proof of Theorem \ref{thm:first}.
\end{proof}

\noindent Therefore, it is enough to output just one sentence from each equivalence class induced by cell graph isomorphism.
However, as is already stated in \Cref{tab:pruning}, sentences with isomorphic cell graphs are {\em not-safe-to-delete}, meaning that combinatorial spectrum is computed only for one member of the induced equivalence class, but all the members are used to further expand the search space.

\section{Experiments}\label{sec:experiments}

In this section, we experimentally evaluate the effectiveness of the techniques for constructing the database of integer sequences described in~\Cref{sec:sfinder} within a reasonable amount of time. 
Furthermore, we take a closer look at a few interesting generated \ctwo{} sentences whose combinatorial spectra appear in OEIS. Finally, we investigate a few sentences whose combinatorial spectra do not appear in OEIS.

\subsection{Filling the Database of Integer Sequences}\label{sec:dbfilling}

We ran two separate experiments with generators of \fotwo{} and \ctwo{} sentences. We set a time-limit of five minutes for the computation of combinatorial spectra per sentence; results for these experiments are depicted in \Cref{fig:fo52} and \Cref{fig:c52} for \fotwo{} and \ctwo{}, respectively. Our aim with these experiments was to assess the effect of the pruning techniques that we proposed. 

We started with a baseline consisting of just the method that filters out sentences which are isomorphic (using the standard notion of isomorphism used in pattern mining literature, which does not consider renaming predicates \citep{farmr}) and with pruning of \emph{decomposable sentences}---these are the very essentials any reasonable method would probably implement. Then we enhanced the baseline with \emph{Tautologies \& Contradiction}. In a similar fashion, we added a single pruning technique on top of the previous one in the following order: \emph{Isomorphic Sentences}, \emph{Negations}, \emph{Permuting Arguments}, \emph{Reflexive Atoms}, \emph{Subsumption}, \emph{Trivial Constraints}, and \emph{Cell Graph Isomorphism}.  It can be seen that our methods reduce both the runtime and the number of generated sentences by orders of magnitude.

The pruning techniques help to scale up the process of filling the database in two ways. Whereas the naive approach (e.g. \emph{baseline}) generates a lot of sentences fast, soon consuming all available memory, {\em safe-to-delete} techniques lower the memory requirements significantly. All pruning techniques consume some computation time, but that is negligible compared to the time needed for computing combinatorial spectra, which is the most time-demanding part of the task; see \Cref{fig:fo52}c  and \Cref{fig:fo52}b, respectively. Since the pruning methods, including those which are {\em not-safe-to-delete}, reduce the number of computations of combinatorial spectra, their use quickly pays off, as can be clearly seen from \Cref{fig:fo52}b and \Cref{fig:c52}b which show the estimated\footnote{Since the methods which do not use the full set of our pruning techniques, generate an extremely high number of (mostly redundant) sentences, computing their spectra would take thousands of hours. Therefore, we only estimated the runtime by computing the spectra only for a random sample of sentences for these methods.} time to fill in the database.

\Cref{fig:fo52}a and \Cref{fig:c52}a also show a lower bound on the number of unique combinatorial spectra,\footnote{Computing spectra for sentences with more than five literals is time demanding, so we estimated the upper levels with only a random sample of longer sentences. Therefore we show the lower bound of spectra only for the fully evaluated levels. } i.e. the minimum number of non-redundant sentences that would fill the database with the same number of unique integer sequences.

\pgfplotsset{scaled y ticks=false}%
\definecolor{patriarch}{rgb}{0.5, 0.0, 0.5}
\definecolor{shockingpink}{rgb}{0.99, 0.06, 0.75}
\definecolor{ao(english)}{rgb}{0.0, 0.5, 0.0}
\definecolor{palesilver}{rgb}{0.79, 0.75, 0.73}
\definecolor{navyblue}{rgb}{0.0, 0.0, 0.5}
\definecolor{mint}{rgb}{0.24, 0.71, 0.54}
\definecolor{saddlebrown}{rgb}{0.55, 0.27, 0.07}

\pgfplotscreateplotcyclelist{shortList}{
{black,thick},
{mint},
{orange},
{patriarch},
{shockingpink},
{saddlebrown},
{palesilver},
{blue},
{ao(english),thick},
{red,thick},
}

\begin{figure*}

\begin{tikzpicture}\begin{groupplot}[group style={group size=3 by 1},height=5cm,width=5.4cm,xlabel=layer,cycle list/Dark2,  xticklabels={},  extra x ticks={0,1,2,3,4,5,6,7,8,9,10} ]

	\nextgroupplot[ylabel={\# sentences},ymode=log, xlabel={(a)},cycle list name=shortList,legend style = {at = {(0.05, 1.1)}, anchor = south west, legend cell  align = left, legend columns = 3}]

		\addplot+[mark=none]	 coordinates {(1, 20.000000) (2, 334.000000) (3, 3890.000000) (4, 31820.000000) (5, 174188.000000) (6, 690522.000000)}; \addlegendentry{baseline} 
		\addplot+[mark=none]	 coordinates {(1, 20.000000) (2, 268.000000) (3, 2934.000000) (4, 19130.000000) (5, 71668.000000) (6, 178782.000000) (7, 325162.000000) (8, 459312.000000)
  }; \addlegendentry{\emph{ Tautologies \& Contradictions }} 
		\addplot+[mark=none]	 coordinates {(1, 20.000000) (2, 268.000000) (3, 2934.000000) (4, 19130.000000) (5, 71668.000000) (6, 178782.000000) (7, 325162.000000) (8, 459312.000000)
  }; \addlegendentry{\emph{Isomorphic Sentences}} 
		\addplot+[mark=none]	 coordinates {(1, 10.000000) (2, 119.000000) (3, 1002.000000) (4, 5688.000000) (5, 19634.000000) (6, 47236.000000) (7, 84165.000000) (8, 117932.000000) (9, 137330.000000) (10, 143054.000000)}; \addlegendentry{\emph{ Negations }} 
		\addplot+[mark=none]	 coordinates {(1, 8.000000) (2, 84.000000) (3, 616.000000) (4, 3386.000000) (5, 11437.000000) (6, 27314.000000) (7, 48444.000000) (8, 67939.000000) (9, 79285.000000) (10, 82829.000000)}; \addlegendentry{\emph{ Permuting Arguments }} 
		\addplot+[mark=none]	 coordinates {(1, 6.000000) (2, 65.000000) (3, 499.000000) (4, 2785.000000) (5, 9924.000000) (6, 24920.000000) (7, 45668.000000) (8, 65082.000000) (9, 76428.000000) (10, 79972.000000)}; \addlegendentry{\emph{ Reflexive Atoms }} 
		\addplot+[mark=none]	 coordinates {(1, 6.000000) (2, 44.000000) (3, 269.000000) (4, 1240.000000) (5, 3612.000000) (6, 7997.000000) (7, 13575.000000) (8, 18897.000000) (9, 22607.000000) (10, 23994.000000)}; \addlegendentry{\emph{ Subsumption }} 
		\addplot+[mark=none]	 coordinates {(1, 4.000000) (2, 42.000000) (3, 253.000000) (4, 1191.000000) (5, 3512.000000) (6, 7858.000000) (7, 13436.000000) (8, 18758.000000) (9, 22468.000000) (10, 23855.000000)}; \addlegendentry{\emph{ Trivial Constraints }} 
		\addplot+[mark=none]	 coordinates {(1, 4.000000) (2, 40.000000) (3, 216.000000) (4, 923.000000) (5, 2642.000000) (6, 5861.000000) (7, 9953.000000) (8, 13839.000000) (9, 16341.000000) (10, 17179.000000)}; \addlegendentry{\emph{ Cell Graph Isomorphism } } 
		\addplot+[mark=none]	 coordinates {(1, 4) (2, 37) (3, 171) (4, 590) (5, 1390)
  }; \addlegendentry{lower bound spectra} 
	\nextgroupplot[ylabel={estimated time [h]}, xshift=0.8cm, xlabel={(b)},cycle list name=shortList]

		\addplot+[mark=none]	 coordinates {(1, 0.000010) (2, 0.310768) (3, 3.440501) (4, 43.076306) (5, 849.055825) (6, 6611.269539)}; 
		\addplot+[mark=none]	 coordinates {(1, 0.000278) (2, 0.320202) (3, 1.877380) (4, 34.321044) (5, 519.864971) (6, 2682.359187) (7, 4242.635707) (8, 7728.464859)
  }; 
		\addplot+[mark=none]	 coordinates {(1, 0.000278) (2, 0.325202) (3, 1.910992) (4, 34.682155) (5, 519.950249) (6, 2681.561132) (7, 4241.635985) (8, 7727.358193)
  }; 
		\addplot+[mark=none]	 coordinates {(1, 0.000278) (2, 0.123255) (3, 0.865450) (4, 11.583262) (5, 156.844816) (6, 739.367980) (7, 1190.143820) (8, 2069.306610) (9, 2896.170438) (10, 3390.534099)}; 
		\addplot+[mark=none]	 coordinates {(1, 0.000278) (2, 0.115711) (3, 0.738394) (4, 7.462906) (5, 93.365818) (6, 430.481398) (7, 690.330469) (8, 1193.728745) (9, 1671.299822) (10, 1960.496256)}; 
		\addplot+[mark=none]	 coordinates {(1, 0.000010) (2, 0.108288) (3, 0.612594) (4, 5.535379) (5, 70.229089) (6, 345.422054) (7, 529.016193) (8, 845.754058) (9, 1347.605466) (10, 1640.470500)}; 
		\addplot+[mark=none]	 coordinates {(1, 0.000010) (2, 0.108288) (3, 0.377590) (4, 3.061012) (5, 31.214229) (6, 124.547823) (7, 180.303038) (8, 267.420315) (9, 406.085720) (10, 502.118766)}; 
		\addplot+[mark=none]	 coordinates {(1, 0.000278) (2, 0.093058) (3, 0.363471) (4, 2.893784) (5, 30.128430) (6, 121.518847) (7, 176.809969) (8, 263.927524) (9, 402.594318) (10, 498.626809)}; 
		\addplot+[mark=none]	 coordinates {(1, 0.020000) (2, 0.132224) (3, 0.387825) (4, 2.457154) (5, 23.130763) (6, 90.083454) (7, 131.803795) (8, 196.610278) (9, 298.420700) (10, 363.364821)}; 
	\nextgroupplot[ylabel={time [h]}, xshift=0.4cm, xlabel={(c)},cycle list name=shortList]

		\addplot+[mark=none]	 coordinates {(1, 0.000010) (2, 0.000288) (3, 0.004454) (4, 0.050843) (5, 0.429732) (6, 2.574732)}; 
		\addplot+[mark=none]	 coordinates {(1, 0.000278) (2, 0.009722) (3, 0.143889) (4, 1.589722) (5, 7.146111) (6, 14.921944) (7, 27.712222) (8, 43.644167)
  }; 
		\addplot+[mark=none]	 coordinates {(1, 0.000278) (2, 0.014722) (3, 0.177500) (4, 1.950833) (5, 7.231389) (6, 14.123889) (7, 26.712500) (8, 42.537500)
  }; 
		\addplot+[mark=none]	 coordinates {(1, 0.000278) (2, 0.006667) (3, 0.072222) (4, 0.561389) (5, 2.818889) (6, 6.584167) (7, 10.180556) (8, 13.951667) (9, 16.091389) (10, 16.691111)}; 
		\addplot+[mark=none]	 coordinates {(1, 0.000278) (2, 0.005556) (3, 0.048889) (4, 0.380278) (5, 1.750278) (6, 4.750000) (7, 7.376111) (8, 9.893611) (9, 11.319444) (10, 11.710556)}; 
		\addplot+[mark=none]	 coordinates {(1, 0.000010) (2, 0.001121) (3, 0.030288) (4, 0.261399) (5, 1.365566) (6, 4.201954) (7, 7.103343) (8, 9.767232) (9, 11.262232) (10, 11.661121)}; 
		\addplot+[mark=none]	 coordinates {(1, 0.000010) (2, 0.001121) (3, 0.030288) (4, 0.261954) (5, 1.370566) (6, 4.211121) (7, 7.129732) (8, 9.809732) (9, 11.311399) (10, 11.711677)}; 
		\addplot+[mark=none]	 coordinates {(1, 0.000278) (2, 0.003056) (3, 0.033333) (4, 0.264444) (5, 1.373611) (6, 4.216667) (7, 7.141111) (8, 9.821389) (9, 11.324444) (10, 11.724167)}; 
		\addplot+[mark=none]	 coordinates {(1, 0.020000) (2, 0.042222) (3, 0.057778) (4, 0.166667) (5, 0.724444) (6, 2.096389) (7, 5.029722) (8, 7.893333) (9, 9.550000) (10, 10.000000)}; 
\end{groupplot}\end{tikzpicture}
\caption{Cumulative \# of \fotwo{} sentences (a), the  expected time to fill in the database (b), and the time needed to generate sentences (c) with up to $x$ literals. At most five literals per clause, at most two clauses per sentence, one unary, and one binary predicate.}\label{fig:fo52}
\end{figure*}


\begin{figure*}

\begin{tikzpicture}\begin{groupplot}[group style={group size=3 by 1},height=5cm,width=5.4cm,xlabel=layer,cycle list/Dark2,  xticklabels={},  extra x ticks={0,1,2,3,4,5,6,7,8,9,10} ]

	\nextgroupplot[ylabel={\# sentences},ymode=log, xlabel={(a)},cycle list name=shortList,legend style = {at = {(0.05, 1.1)}, anchor = south west, legend cell  align = left, legend columns = 3}]

		\addplot+[mark=none]	 coordinates {(1, 32.000000) (2, 560.000000) (3, 5900.000000) (4, 40926.000000) (5, 200506.000000)}; \addlegendentry{baseline} 
		\addplot+[mark=none]	 coordinates {(1, 32.000000) (2, 472.000000) (3, 4636.000000) (4, 24456.000000) (5, 81326.000000) (6, 190984.000000) (7, 337364.000000)
  }; \addlegendentry{\emph{ Tautologies \& Contradictions }} 
		\addplot+[mark=none]	 coordinates {(1, 32.000000) (2, 472.000000) (3, 4636.000000) (4, 24456.000000) (5, 81326.000000) (6, 190984.000000) (7, 337364.000000) (8, 471514.000000) (9, 549106.000000)
  }; \addlegendentry{\emph{Isomorphic Sentences}} 
		\addplot+[mark=none]	 coordinates {(1, 16.000000) (2, 224.000000) (3, 1624.000000) (4, 7346.000000) (5, 22442.000000) (6, 50680.000000) (7, 87609.000000) (8, 121376.000000) (9, 140774.000000) (10, 146498.000000)}; \addlegendentry{\emph{ Negations }} 
		\addplot+[mark=none]	 coordinates {(1, 12.000000) (2, 145.000000) (3, 959.000000) (4, 4299.000000) (5, 12992.000000) (6, 29239.000000) (7, 50369.000000) (8, 69864.000000) (9, 81210.000000) (10, 84754.000000)}; \addlegendentry{\emph{ Permuting Arguments }} 
		\addplot+[mark=none]	 coordinates {(1, 9.000000) (2, 121.000000) (3, 796.000000) (4, 3601.000000) (5, 11360.000000) (6, 26726.000000) (7, 47474.000000) (8, 66888.000000) (9, 78234.000000) (10, 81778.000000)}; \addlegendentry{\emph{ Reflexive Atoms }} 
		\addplot+[mark=none]	 coordinates {(1, 9.000000) (2, 100.000000) (3, 473.000000) (4, 1705.000000) (5, 4426.000000) (6, 9045.000000) (7, 14623.000000) (8, 19945.000000) (9, 23655.000000) (10, 25042.000000)}; \addlegendentry{\emph{ Subsumption }} 
		\addplot+[mark=none]	 coordinates {(1, 7.000000) (2, 94.000000) (3, 453.000000) (4, 1652.000000) (5, 4322.000000) (6, 8902.000000) (7, 14480.000000) (8, 19802.000000) (9, 23512.000000) (10, 24899.000000)}; \addlegendentry{\emph{ Trivial Constraints }} 
		\addplot+[mark=none]	 coordinates {(1, 7.000000) (2, 91.000000) (3, 405.000000) (4, 1367.000000) (5, 3418.000000) (6, 6857.000000) (7, 10949.000000) (8, 14835.000000) (9, 17337.000000) (10, 18175.000000)}; \addlegendentry{\emph{ Cell Graph Isomorphism } } 
		\addplot+[mark=none]	 coordinates {(1, 7) (2, 71) (3, 248) (4, 667) (5, 1467) 
  }; \addlegendentry{lower bound spectra} 
	\nextgroupplot[ylabel={estimated time [h]}, xshift=1.0cm, xlabel={(b)},cycle list name=shortList]

		\addplot+[mark=none]	 coordinates {(1, 0.000010) (2, 0.662413) (3, 14.339643) (4, 138.667098) (5, 1144.499324)}; 
		\addplot+[mark=none]	 coordinates {(1, 0.000010) (2, 0.666858) (3, 11.774282) (4, 109.197354) (5, 695.956707) (6, 3019.225163) (7, 4713.280076)
  }; 
		\addplot+[mark=none]	 coordinates {(1, 0.000010) (2, 0.668802) (3, 11.789282) (4, 109.223465) (5, 696.093652) (6, 3019.666552) (7, 4714.030354) (8, 8199.755617) (9, 11485.025282)
  }; 
		\addplot+[mark=none]	 coordinates {(1, 0.000010) (2, 0.292313) (3, 6.122592) (4, 37.421033) (5, 211.469218) (6, 835.253642) (7, 1322.644186) (8, 2201.738921) (9, 3028.546915) (10, 3522.916688)}; 
		\addplot+[mark=none]	 coordinates {(1, 0.000010) (2, 0.200888) (3, 3.667599) (4, 22.106933) (5, 123.721818) (6, 482.926858) (7, 763.389982) (8, 1266.484369) (9, 1743.897112) (10, 2033.054657)}; 
		\addplot+[mark=none]	 coordinates {(1, 0.000010) (2, 0.197014) (3, 3.123318) (4, 18.327991) (5, 96.538508) (6, 390.084807) (7, 644.600273) (8, 961.060916) (9, 1462.773435) (10, 1755.608746)}; 
		\addplot+[mark=none]	 coordinates {(1, 0.000010) (2, 0.197292) (3, 2.603596) (4, 11.115792) (5, 45.843456) (6, 149.580435) (7, 227.737245) (8, 314.960634) (9, 453.726316) (10, 549.796029)}; 
		\addplot+[mark=none]	 coordinates {(1, 0.000010) (2, 0.179850) (3, 2.550197) (4, 10.717758) (5, 44.419905) (6, 146.065922) (7, 223.357339) (8, 310.248506) (9, 448.797521) (10, 544.806123)}; 
		\addplot+[mark=none]	 coordinates {(1, 0.009167) (2, 0.194562) (3, 2.570952) (4, 9.995400) (5, 37.774905) (6, 116.329855) (7, 175.145546) (8, 239.851473) (9, 341.573006) (10, 406.509071)}; 
	\nextgroupplot[ylabel={time [h]}, xshift=0.2cm, xlabel={(c)},cycle list name=shortList]

		\addplot+[mark=none]	 coordinates {(1, 0.000010) (2, 0.000288) (3, 0.003899) (4, 0.041677) (5, 0.387510)}; 
		\addplot+[mark=none]	 coordinates {(1, 0.000010) (2, 0.004732) (3, 0.057232) (4, 0.717510) (5, 3.612510) (6, 10.206954) (7, 22.538621)
  }; 
		\addplot+[mark=none]	 coordinates {(1, 0.000010) (2, 0.006677) (3, 0.072232) (4, 0.743621) (5, 3.749454) (6, 10.648343) (7, 23.288899) (8, 39.116954) (9, 47.911399)
  }; 
		\addplot+[mark=none]	 coordinates {(1, 0.000010) (2, 0.002510) (3, 0.026677) (4, 0.215843) (5, 1.026677) (6, 2.794454) (7, 6.166677) (8, 9.869732) (9, 11.953621) (10, 12.559454)}; 
		\addplot+[mark=none]	 coordinates {(1, 0.000010) (2, 0.001954) (3, 0.014454) (4, 0.123899) (5, 0.618066) (6, 1.636121) (7, 3.614732) (8, 5.828343) (9, 7.095843) (10, 7.448066)}; 
		\addplot+[mark=none]	 coordinates {(1, 0.000010) (2, 0.001399) (3, 0.011399) (4, 0.092788) (5, 0.505288) (6, 1.463621) (7, 3.445566) (8, 5.832232) (9, 7.188343) (10, 7.557510)}; 
		\addplot+[mark=none]	 coordinates {(1, 0.000010) (2, 0.001677) (3, 0.014177) (4, 0.121399) (5, 0.638621) (6, 1.768066) (7, 4.013621) (8, 6.799732) (9, 8.401677) (10, 8.838621)}; 
		\addplot+[mark=none]	 coordinates {(1, 0.000010) (2, 0.001399) (3, 0.012232) (4, 0.090288) (5, 0.498343) (6, 1.460010) (7, 3.481121) (8, 5.935010) (9, 7.320288) (10, 7.696121)}; 
		\addplot+[mark=none]	 coordinates {(1, 0.009167) (2, 0.016111) (3, 0.040833) (4, 0.174444) (5, 0.708056) (6, 1.920833) (7, 4.218056) (8, 6.981111) (9, 8.548889) (10, 8.990833)}; 
\end{groupplot}\end{tikzpicture}
\caption{Cumulative \# of \ctwo{} sentences (a), the expected time to fill in the database (b), and the time needed to generate sentences with up to $x$ literals. At most five literals per clause, at most two clauses per sentence, one unary and one binary predicate, $k \leq 1$.}\label{fig:c52}\end{figure*}


We refer to the \Cref{sec:experimentsDetails} for detailed information about the setup of the experiments and \Cref{sec:evaluationPruning} for more experiments of the pruning techniques.

\subsection{An Initial Database Construction}\label{sec:experiments:hits}

\begin{table*}[h]
\caption{A sample of sequences that are combinatorial spectra of sentences generated by our algorithm that also appear in OEIS.}\label{tab:selectedHits}
\begin{tabular}{p{6.6cm}lp{6.6cm}}
    \bf Sentence & \bf OEIS ID  & \bf OEIS name 
   \\ \hline 
    $(\forall x \exists^{=1} y  B(x, y)) \land (\forall x \exists^{=1} y  B(y, x)) \land (\forall x \forall y  B(x, x) \lor B(x, y) \lor \lnot B(y, x))$ & A85 & Number of self-inverse permutations on $n$ letters, also known as involutions; number of standard Young tableaux with $n$ cells.\\ \hline
    $(\forall x \exists^{=1} y \; B(x, y)) \land (\forall x \exists^{=1} y \; B(y, x))$ & A142 & Factorial numbers: $n! = 1\cdot 2\cdot 3\cdot 4\cdot ...\cdot n$ (order of symmetric group $S_n$, number of permutations of $n$ letters). \\ \hline
    $(\forall x  B(x, x)) \land (\forall x \exists^{=1} y \lnot B(x, y)) \land (\forall x \exists^{=1} y \lnot B(y, x))$ & A166 & Subfactorial or rencontres numbers, or derangements: number of permutations of $n$ elements with no fixed points.\\ \hline
    $(\forall x \forall y  B(x, y) \lor \lnot B(y, x)) \land (\exists x B(x, x)) \land (\forall x \exists^{=1} y  \lnot B(x, y))$ & A1189 & Number of degree-$n$ permutations of order exactly~2.\\ \hline
    $(\forall x \forall y \; U(x) \lor B(x, y)) \land (\forall x \forall y \; \lnot U(x) \lor B(y, x))$ & A47863 & Number of labeled graphs with 2-colored nodes where black nodes are only connected to white nodes and vice versa.\\ \hline
    $(\forall x \; B(x, x)) \land (\forall x \exists y \; \lnot B(x, y)) \land (\forall x \exists y \; \lnot B(y, x))$ & A86193 & Number of $n \times n$ matrices with entries in $\{0,1\}$ with no zero row, no zero column and with zero main diagonal.\\ \hline
    $(\forall x \forall y \; U(x) \lor \lnot U(y) \lor B(x, y)) \land (\forall x \exists^{=1} y \; \lnot B(x, y))$ & A290840 & $	a(n) = n! \cdot [x^n] \frac{\exp(n \cdot x)}{1 + LambertW(-x)}$. \\ \hline
\end{tabular}
\end{table*}

Apart from the experiments in which we compared the benefits of the proposed pruning methods, we also used our algorithm to generate an initial database of combinatorial sequences. For that we let the sentence generator run for five days to obtain a collection of sentences and their combinatorial spectra on a machine with 500 GB RAM, 128 processors (we used multi-threading). We used a five-minute time limit for combinatorial spectrum computation of a sequence. 

The result was a database containing over 26,000 unique integer sequences. 
For each of the sequences in our database, we queried OEIS to determine if the sequence matches a sequence which is already in OEIS. 
We found that 301 of the sequences were present in OEIS---this makes $\approx$1.2\% of the sequences we generated. This may not sound like much, but it is certainly non-negligible. Moreover, our goal was to generate primarily new sequences. We show several interesting generated sequences that happened to be in OEIS in Table \ref{tab:selectedHits}.

An example of an interesting sequence is the last one in Table \ref{tab:selectedHits}. This sequence does not have any combinatorial characterization in OEIS. We can obtain such a characterization from the \ctwo{} sentence that generated it:\footnote{For easier readability, we replaced the predicate $B$ by its negation, which does not change the spectrum.} $(\forall x \forall y \; U(x) \lor \lnot U(y) \lor \lnot B(x, y)) \land (\forall x \exists^{=1} y \; B(x, y))$. This can be interpreted as follows: {\em We are counting configurations consisting of a function $b : [n] \rightarrow [n]$ and a set $U \subseteq [n]$ that satisfy that if $y = b(x)$ and $y \in U$ then $x \in U$}. While this may not be a profound combinatorial problem, it provides a combinatorial interpretation for the sequence at hand---we would not be able to find it without the database.

Next we discuss several examples of arguably natural combinatorial sequences that were constructed by our algorithm which are not present in OEIS. The first of these examples is the sequence $0$, $0$, $6$, $72$, $980$, $15360$, $\dots$ generated by the sentence $(\forall x\; \neg B(x, x)) \land (\exists x \forall y\; \neg B(y, x)) \land (\forall x \exists^{=1} y\; B(x, y))$. We can interpret it as counting the number of functions $f : [n] \rightarrow [n]$ without fixed points and with image not equal to $[n]$. Another example is the sequence $1$, $7$, $237$, $31613$, $16224509$, $31992952773$, $\dots$, which corresponds to the sentence
$(\forall x \exists y \; B(x, y)) \land (\exists x \forall y\; B(x, y) \vee B(y, x))$ and counts directed graphs on $n$ vertices in which every vertex has non-zero out-degree and there is a vertex that is connected to all other vertices (including to itself) by either an outgoing or incoming edge. Yet another example is the sequence $1$, $5$, $127$, $12209$, $4329151$, $5723266625$, $\dots$, corresponding to the sentence $(\forall x \exists y\; B(x, y)) \land (\exists x \forall y\; B(x, y))$, which counts directed graphs where every vertex has non-zero out-degree and at least one vertex has out-degree $n$, which is also the same as the number of binary matrices with no zero rows and at least one row containing all ones. These examples correspond to the simpler structures in the database, there are others which are more complex (and also more difficult to interpret). For example, another sequence $0$, $3$, $43$, $747$, $22813$, $1352761$, $\dots$ constructed by our algorithm, given by the sentence $(\forall x\; \neg B(x, x)) \land (\forall x \forall y\; \neg B(x, y) \vee B(y, x)) \land (\exists x \forall y\; \neg B(x, y) \vee \neg U(y)) \land (\exists x \exists y\; B(x, y))$, counts 
undirected graphs without loops with at least one edge and with vertices labeled by two colors, red and black (red corresponding to $U(x)$, and black corresponding to $\neg U(x)$) such that there is at least one vertex not connected to any of the red vertices (note that this vertex can itself be red). We could keep on listing similar sequences, but we believe the handful we showed here give sufficient idea about the kind of sequences one could find in the database constructed by our system.

\section{Related Work}\label{sec:relatedWork}

To our best knowledge, there has been no prior work on automated generation of combinatorial sequences. However, there were works that intersect with the work presented in this paper in certain aspects. The most closely related are works on lifted inference \citep{DBLP:conf/ijcai/Poole03,DBLP:conf/uai/GogateD11a,Lifting/FO2-domain-liftable,Lifting/Skolemization,Lifting/FO3-intractable,Lifting/C2-domain-liftable}; this work would not be possible without lifted inference. We directly use the algorithms, even though re-implemented, as well as the concept of cell graphs from \citep{Lifting/FO2-cell-graphs}. The detection of isomorphic sentences is similar to techniques presented in \citep{DBLP:conf/aaai/BremenD0RM21}, however, that work focused on propositional logic problems, whereas here we use these techniques for problems with first-order logic sentences. There were also works on automated discovery in mathematics, e.g. \citep{colton2002hr,davies2021advancing} or the database \url{http://sequencedb.net}, but as far as we know, none in enumerative combinatorics that would be similar to ours. The closest line of works at the intersection of combinatorics and artificial intelligence are the works \citep{suster2021mapping} and \citep{totis2023lifted}. However, those works do not attempt to generate new sequences or new combinatorics results, as they mostly aim at solving textbook-style combinatorial problems, which is still a highly non-trivial problem too, though. Finally, there are several recent works that use OEIS sequences as inputs for program synthesis, e.g., \citep{pmlr-v162-d-ascoli22a,https://doi.org/10.48550/arxiv.2301.11479}. The goal of such works is orthogonal to ours and it would be interesting to see whether we could get interesting synthesized programs if we used our combinatorial sequences, which are not present in OEIS, to these systems.

\section{Conclusion}\label{sec:conclusion}
We have introduced a method for constructing a database of integer sequences with a combinatorial interpretation and used it to generate a small initial database consisting of more than 26k unique sequences, of which a non-negligible fraction appears to have been studied, which is a sign that we are able to generate interesting integer sequences automatically. Our approach has two key components: an existing lifted-inference algorithm \citep{Lifting/FO2-cell-graphs} that computes sequences from first-order logic sentences and the new method for generation of first-order sentences which successfully prunes huge numbers of redundant sentences.

\appendix

\section{Implementation Details}\label{sec:implementation}

We implemented the sentence generator described in {Section 3.2} with all of the pruning techniques from {Table 1} in Java using the following dependencies:  {Sat4J}\footnote{\url{https://www.sat4j.org/}}, {supertweety}\footnote{\url{https://github.com/supertweety/LogicStuff}}, and Prover9\footnote{\url{https://www.cs.unm.edu/~mccune/prover9/}}. A pseudocode of the algorithm generating sentences is depicted in~\Cref{alg:algorithm}; the output \emph{sentences} contains all generated \ctwo{} sentences which are further inserted into the database and their combinatorial spectra are computed.

\begin{algorithm}[tb]
    \caption{Pseudocode of the \ctwo{} sentence generator}
    \label{alg:algorithm}
    \textbf{Parameter}: literals limit as $ML$, clauses limits as $MC$, \# unary and binary predicates as $UP$ and $BP$\\
    \textbf{Output}: sentences
    \begin{algorithmic}[1]
        \STATE sentences $\leftarrow \emptyset$ 
        \STATE hidden $\leftarrow \emptyset$ 
        \STATE layer $\leftarrow \{\emptyset\}$
        \FOR{$i \in [1, 2, 3,\dots, ML \times MC ]$}
            \STATE nextLayer $\leftarrow \emptyset$
            \FOR{sentence $\in$ layer} 
                \FOR{child $\in$ refinements(sentence, $ML$, $MC$, $UP$, $BP$)}
                    \IF{is-redundant(child, sentences $\cup$ hidden)}
                        \IF{is-safe-to-delete(child, sentences $\cup$ hidden)}
                            \STATE discard the sentence \emph{child}
                        \ELSE
                            \STATE nextLayer $\leftarrow $ nextLayer $\cup \{$child$\}$  
                            \STATE hidden $\leftarrow $ hidden $\cup \{$child$\}$    
                        \ENDIF
                    \ELSE
                        \STATE nextLayer $\leftarrow $ nextLayer $\cup \{$child$\}$    
                        \STATE sentences $\leftarrow $ sentences $\cup \{$child$\}$    
                    \ENDIF
                \ENDFOR
            \ENDFOR
            \STATE layer $\leftarrow$ nextLayer
        \ENDFOR
        \STATE \textbf{return} sentences
    \end{algorithmic}
\end{algorithm}

To avoid the possibility of getting stuck while checking whether a sentence is a tautology, a contradiction, or neither of those, we use a short time limit, i.e. 30 seconds, for Prover9's execution, using it effectively as a soft filter. In general, the check is an undecidable problem, and for \ctwo{} it is {NEXPTIME}-complete~\citep{pratt2005complexity}. 

The WFOMC computation, including cell graph construction and \ctwo-related reductions, was implemented in the Julia programming language \citep{Julia/Julia}.
We made use of the Nemo.jl package \cite{Julia/Nemo} for polynomial representation and manipulation.

\subsection{OEIS Hits}

\Cref{tab:moreSelectedHits} shows all OEIS hits found during the initial database construction described in \Cref{sec:experiments:hits}.

\section{Further Details}

In this section, we discuss remaining technical issues.

\subsection{Complexity of Combinatorial Equivalence}

The problem of deciding whether two sentences have the same combinatorial spectrum is no easier than checking whether they are equivalent, which can be seen as follows. Let one of the sentences be a contradiction. Checking whether the other sentence has the same combinatorial spectrum, i.e., $0,0,0,\dots$, is equivalent to checking whether it is also a contradiction. This is only a complexity lower bound, but it already shows that checking equivalence of combinatorial spectra of \ctwo{} sentences is NEXPTIME-hard, which follows from the classical results on the complexity of satisfiability checking in \ctwo{} \cite{pratt2005complexity}. 
The exact complexity of deciding whether two \ctwo{} sentences generate the same combinatorial spectra remains an interesting open problem.

\subsection{Isomorphism of Sentences}

Here, we give a formal definition of isomorphism of two \ctwo{} sentences. We only consider sentences in the form described in \Cref{sec:sfinder}: 

\begin{align}
\bigwedge_{\begin{array}{c} Q_1 \in \{ \forall, \exists, \exists^{=1}, \dots, \exists^{=K} \},\\ Q_2 \in \{ \forall, \exists, \exists^{=1}, \dots, \exists^{=K} \} \end{array}} \bigwedge_{i=1}^{M} Q_1 x Q_2 y \; \Phi^{Q1,Q2}_i(x,y) \wedge \notag \\ 
\bigwedge_{Q \in \{ \forall, \exists, \exists^{=1}, \dots, \exists^{=K} \}} \bigwedge_{i=1}^{M'} Q x  \; \Phi^{Q}_i(x).\label{eq:sentenceFormAppendix}
\end{align}
where each $\Phi^{Q1,Q2}_i(x,y)$ and $\Phi^{Q}_i(x)$ is a disjunction of literals.

We say that the clause $Q_1 x Q_2 y \; \Phi(x,y)$ is isomorphic to the clause $Q_1' x Q_2' y \; \Psi(x,y)$ if one of the following conditions holds:
\begin{enumerate}
    \item If $Q_1 = Q_1' = Q_2 = Q_2'$ and there exists a bijection $f : \{x,y \} \rightarrow \{x,y\}$ such that the set of literals of $f(\Phi(x,y))$ is the same as the set of literals of $\Psi(x,y)$.
    \item If $Q_1 = Q_1' \neq Q_2 = Q_2'$ and the set of literals of $\Phi(x,y)$ is the same as the set of literals of $\Psi(x,y)$.
\end{enumerate}

We say that two sentences of the form (\ref{eq:sentenceForm}) are isomorphic if, for every clause from one sentence, we can find a clause from the other sentence which it is isomorphic to.

Finally, we extend this with isomorphism that allows renaming predicates: We say that two sentences of the form (\ref{eq:sentenceFormAppendix}) are isomorphic by renaming of predicates if there exists a bijection between their predicates that makes the two sentences isomorphic according to the definition of isomorphism above.

\subsection{Experiments Setup}\label{sec:experimentsDetails}

This section contains all details needed to reproduce experiments done in \Cref{sec:experiments}. Namely, the \fotwo{} experiment visualized in \Cref{fig:fo52} had the following language restrictions: at most five literals per clause, at most two clauses per sentence, at most one unary and one binary predicate. 

These restrictions also apply to the second experiment visualized in \Cref{fig:c52} which concerned \ctwo{} with $k \leq 1$, i.e. only quantifiers of type $\forall x \exists^{k=1} y$, $\exists^{k=1} x$, etc. We further forbid tuples of quantifiers where one is in the form $\exists^{=k} x$ and the other is either $\exists^{=l} y$ or $\exists y$, i.e. $\exists x \exists^{=k} y$, $\exists^{=k} x \exists y$, $\exists^{=k} x \exists^{=l} y$. Those three combinations do not scale well, so we would not be able to compute their combinatorial spectra within the five-minute limit we used for filling in the database. For the same reason, we restricted the number of literals to at most one in a clause with a counting quantifier.

Each sentence generator was executed with 51GB of memory and 48 hours of computation time. A missing part of a curve means that the corresponding generator exceeded one of the limits.

\section{Further Evaluation of Pruning Techniques}\label{sec:evaluationPruning}

We ran two more experiments that were focused purely on the presented pruning techniques rather than on the construction of a database of integer sequences.

The first experiment concerns the generation of \fotwo{} and \ctwo{} sentences with up to six literals per a clause; the rest of the setup stays the same as in \Cref{sec:experimentsDetails}.  We did not compute combinatorial spectra for those sentences, thus we only show the number of generated sentences and the runtime of sentence generators for \fotwo{} and \ctwo{} in  \Cref{fig:fo62} and \Cref{fig:c62}, respectively.
 
\begin{figure*}

\begin{tikzpicture}\begin{groupplot}[group style={group size=2 by 1},height=5cm,width=8.0cm,xlabel=layer,cycle list/Dark2,  xticklabels={},  extra x ticks={0,1,2,3,4,5,6,7,8,9,10,11,12} ]

	\nextgroupplot[ylabel={\# sentences},ymode=log, xlabel={(a)},cycle list name=shortList,legend style = {at = {(0.1, 1.1)}, anchor = south west, legend cell  align = left, legend columns = 3}]

		\addplot+[mark=none]	 coordinates {(1, 20.000000) (2, 334.000000) (3, 3890.000000) (4, 31820.000000) (5, 174188.000000)}; \addlegendentry{baseline} 
		\addplot+[mark=none]	 coordinates {(1, 20.000000) (2, 268.000000) (3, 2934.000000) (4, 19130.000000) (5, 71668.000000) (6, 178842.000000) (7, 326422.000000) (8, 469272.000000)}; \addlegendentry{\emph{ Tautologies \& Contradictions }} 
		\addplot+[mark=none]	 coordinates {(1, 20.000000) (2, 268.000000) (3, 2934.000000) (4, 19130.000000) (5, 71668.000000) (6, 178842.000000) (7, 326422.000000) (8, 469272.000000)}; \addlegendentry{\emph{Isomorphic Sentences}} 
		\addplot+[mark=none]	 coordinates {(1, 10.000000) (2, 119.000000) (3, 1002.000000) (4, 5688.000000) (5, 19634.000000) (6, 47254.000000) (7, 84503.000000) (8, 120516.000000) (9, 144706.000000) (10, 155958.000000) (11, 159138.000000) (12, 159624.000000)}; \addlegendentry{\emph{ Negations }} 
		\addplot+[mark=none]	 coordinates {(1, 8.000000) (2, 84.000000) (3, 616.000000) (4, 3386.000000) (5, 11437.000000) (6, 27329.000000) (7, 48687.000000) (8, 69619.000000) (9, 83949.000000) (10, 90929.000000) (11, 93027.000000) (12, 93389.000000)}; \addlegendentry{\emph{ Permuting Arguments }} 
		\addplot+[mark=none]	 coordinates {(1, 6.000000) (2, 65.000000) (3, 499.000000) (4, 2785.000000) (5, 9924.000000) (6, 24935.000000) (7, 45911.000000) (8, 66762.000000) (9, 81092.000000) (10, 88072.000000) (11, 90170.000000) (12, 90532.000000)}; \addlegendentry{\emph{ Reflexive Atoms }} 
		\addplot+[mark=none]	 coordinates {(1, 6.000000) (2, 44.000000) (3, 269.000000) (4, 1240.000000) (5, 3612.000000) (6, 8008.000000) (7, 13711.000000) (8, 19625.000000) (9, 24128.000000) (10, 26715.000000) (11, 27577.000000) (12, 27750.000000)}; \addlegendentry{\emph{ Subsumption }} 
		\addplot+[mark=none]	 coordinates {(1, 4.000000) (2, 42.000000) (3, 253.000000) (4, 1191.000000) (5, 3512.000000) (6, 7869.000000) (7, 13555.000000) (8, 19469.000000) (9, 23972.000000) (10, 26559.000000) (11, 27421.000000) (12, 27594.000000)}; \addlegendentry{\emph{ Trivial Constraints }} 
		\addplot+[mark=none]	 coordinates {(1, 4.000000) (2, 40.000000) (3, 216.000000) (4, 923.000000) (5, 2642.000000) (6, 5869.000000) (7, 10027.000000) (8, 14218.000000) (9, 17171.000000) (10, 18679.000000) (11, 19104.000000) (12, 19177.000000)}; \addlegendentry{\emph{ Cell Graph Isomorphism } } 

 \nextgroupplot[ylabel={time [h]}, xshift=0.8cm, xlabel={(b)},cycle list name=shortList]

		\addplot+[mark=none]	 coordinates {(1, 0.000010) (2, 0.000288) (3, 0.004454) (4, 0.036121) (5, 0.301399)}; 
		\addplot+[mark=none]	 coordinates {(1, 0.002778) (2, 0.068611) (3, 0.814722) (4, 1.595278) (5, 4.988611) (6, 12.797222) (7, 26.589167) (8, 43.439167)}; 
		\addplot+[mark=none]	 coordinates {(1, 0.002778) (2, 0.076667) (3, 0.860556) (4, 1.838056) (5, 6.049167) (6, 14.286111) (7, 28.129167) (8, 44.997222)}; 
		\addplot+[mark=none]	 coordinates {(1, 0.002778) (2, 0.034444) (3, 0.354444) (4, 1.094167) (5, 2.075556) (6, 4.311111) (7, 8.477500) (8, 12.967222) (9, 15.731389) (10, 17.085000) (11, 17.479722) (12, 17.534722)}; 
		\addplot+[mark=none]	 coordinates {(1, 0.002500) (2, 0.028333) (3, 0.228611) (4, 0.993611) (5, 1.575278) (6, 2.853333) (7, 5.285000) (8, 8.093611) (9, 9.987500) (10, 10.872500) (11, 11.154167) (12, 11.197222)}; 
		\addplot+[mark=none]	 coordinates {(1, 0.001667) (2, 0.018611) (3, 0.162500) (4, 0.930833) (5, 1.444167) (6, 2.598333) (7, 5.085556) (8, 8.075278) (9, 10.037222) (10, 10.908889) (11, 11.185278) (12, 11.226389)}; 
		\addplot+[mark=none]	 coordinates {(1, 0.002222) (2, 0.018889) (3, 0.161111) (4, 0.928333) (5, 1.455000) (6, 2.630833) (7, 5.147222) (8, 8.183333) (9, 10.184444) (10, 11.082778) (11, 11.357222) (12, 11.396111)}; 
		\addplot+[mark=none]	 coordinates {(1, 0.002222) (2, 0.019167) (3, 0.169167) (4, 0.941389) (5, 1.523611) (6, 2.804167) (7, 5.532222) (8, 8.700556) (9, 10.740556) (10, 11.626667) (11, 11.885833) (12, 11.925833)}; 
		\addplot+[mark=none]	 coordinates {(1, 0.020278) (2, 0.053611) (3, 0.208889) (4, 0.971944) (5, 1.629722) (6, 3.093611) (7, 6.118056) (8, 9.546667) (9, 11.862500) (10, 12.823056) (11, 13.096667) (12, 13.138611)}; 
\end{groupplot}\end{tikzpicture}
\caption{Cumulative \# of \fotwo{} sentences (a) and the time needed to generate sentences (b) with up to $x$ literals. At most six literals per clause, at most two clauses per sentence, one unary, and one binary predicate.}\label{fig:fo62}\end{figure*}
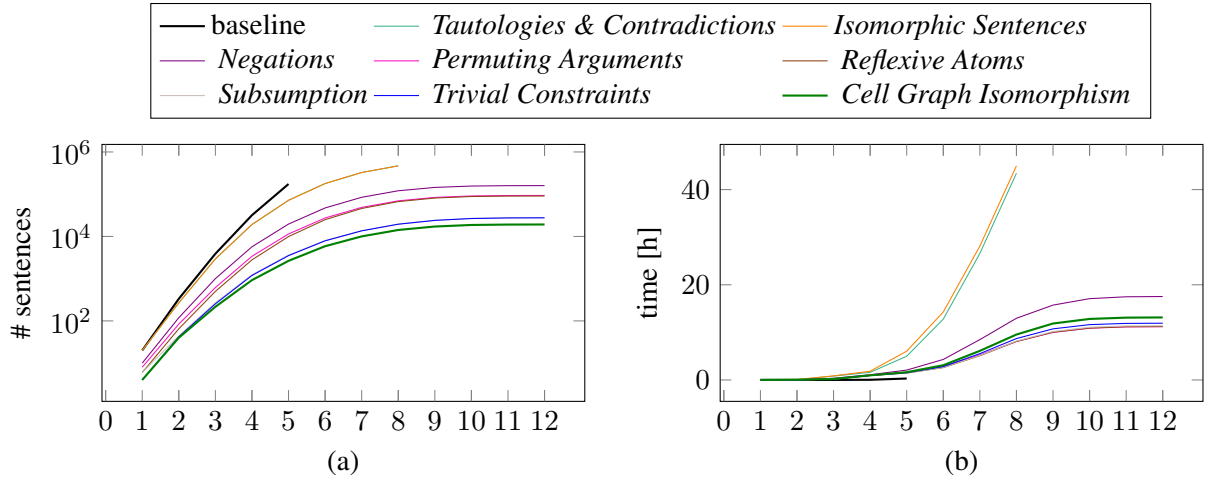

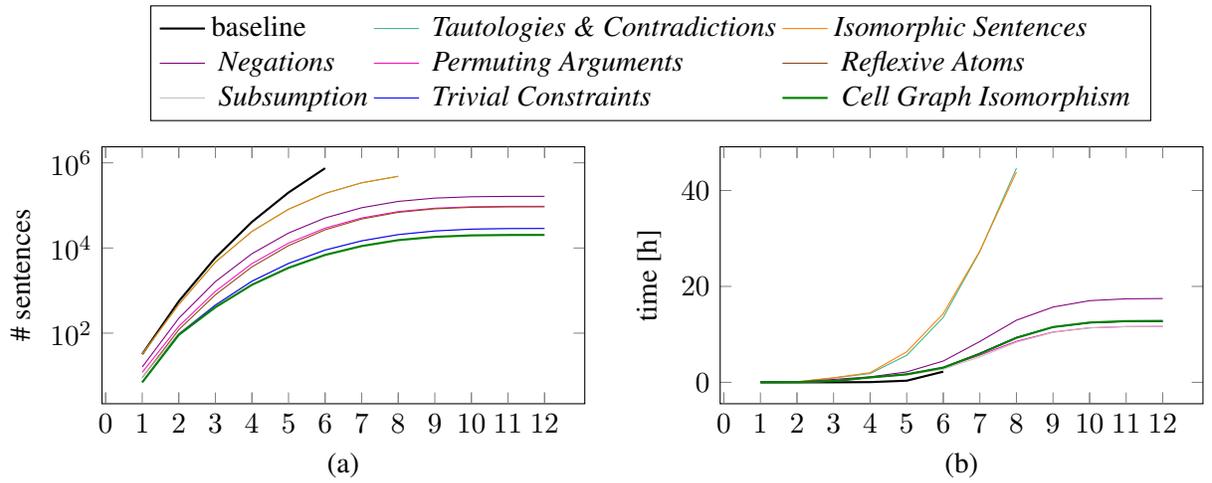
\begin{figure*}

\begin{tikzpicture}\begin{groupplot}[group style={group size=2 by 1},height=5cm,width=8.0cm,xlabel=layer,cycle list/Dark2,  xticklabels={},  extra x ticks={0,1,2,3,4,5,6,7,8,9,10,11,12} ]

	\nextgroupplot[ylabel={\# sentences},ymode=log, xlabel={(a)},cycle list name=shortList,legend style = {at = {(0.1, 1.1)}, anchor = south west, legend cell  align = left, legend columns = 3}]

		\addplot+[mark=none]	 coordinates {(1, 32.000000) (2, 560.000000) (3, 5900.000000) (4, 40926.000000) (5, 200506.000000) (6, 747584.000000)}; \addlegendentry{baseline} 
		\addplot+[mark=none]	 coordinates {(1, 32.000000) (2, 472.000000) (3, 4636.000000) (4, 24456.000000) (5, 81326.000000) (6, 191044.000000) (7, 339344.000000) (8, 482194.000000)}; \addlegendentry{\emph{ Tautologies \& Contradictions }} 
		\addplot+[mark=none]	 coordinates {(1, 32.000000) (2, 472.000000) (3, 4636.000000) (4, 24456.000000) (5, 81326.000000) (6, 191044.000000) (7, 339344.000000) (8, 482194.000000)}; \addlegendentry{\emph{Isomorphic Sentences }} 
		\addplot+[mark=none]	 coordinates {(1, 16.000000) (2, 224.000000) (3, 1624.000000) (4, 7346.000000) (5, 22442.000000) (6, 50698.000000) (7, 88139.000000) (8, 124152.000000) (9, 148342.000000) (10, 159594.000000) (11, 162774.000000) (12, 163260.000000)}; \addlegendentry{\emph{ Negations }} 
		\addplot+[mark=none]	 coordinates {(1, 12.000000) (2, 145.000000) (3, 959.000000) (4, 4299.000000) (5, 12992.000000) (6, 29254.000000) (7, 50730.000000) (8, 71662.000000) (9, 85992.000000) (10, 92972.000000) (11, 95070.000000) (12, 95432.000000)}; \addlegendentry{\emph{ Permuting Arguments }} 
		\addplot+[mark=none]	 coordinates {(1, 9.000000) (2, 121.000000) (3, 796.000000) (4, 3601.000000) (5, 11360.000000) (6, 26741.000000) (7, 47835.000000) (8, 68686.000000) (9, 83016.000000) (10, 89996.000000) (11, 92094.000000) (12, 92456.000000)}; \addlegendentry{\emph{ Reflexive Atoms }} 
		\addplot+[mark=none]	 coordinates {(1, 9.000000) (2, 100.000000) (3, 473.000000) (4, 1705.000000) (5, 4426.000000) (6, 9056.000000) (7, 14845.000000) (8, 20759.000000) (9, 25262.000000) (10, 27849.000000) (11, 28711.000000) (12, 28884.000000)}; \addlegendentry{\emph{ Subsumption }} 
		\addplot+[mark=none]	 coordinates {(1, 7.000000) (2, 94.000000) (3, 453.000000) (4, 1652.000000) (5, 4322.000000) (6, 8913.000000) (7, 14685.000000) (8, 20599.000000) (9, 25102.000000) (10, 27689.000000) (11, 28551.000000) (12, 28724.000000)}; \addlegendentry{\emph{ Trivial Constraints }} 
		\addplot+[mark=none]	 coordinates {(1, 7.000000) (2, 91.000000) (3, 405.000000) (4, 1367.000000) (5, 3418.000000) (6, 6865.000000) (7, 11100.000000) (8, 15291.000000) (9, 18244.000000) (10, 19752.000000) (11, 20177.000000) (12, 20250.000000)}; \addlegendentry{\emph{ Cell Graph Isomorphism } } 

	\nextgroupplot[ylabel={time [h]}, xshift=0.8cm, xlabel={(b)},cycle list name=shortList]

		\addplot+[mark=none]	 coordinates {(1, 0.000010) (2, 0.000288) (3, 0.004454) (4, 0.042510) (5, 0.336954) (6, 2.224732)}; 
		\addplot+[mark=none]	 coordinates {(1, 0.000010) (2, 0.086121) (3, 0.912788) (4, 1.826677) (5, 5.630843) (6, 13.560566) (7, 27.286954) (8, 44.648621)}; 
		\addplot+[mark=none]	 coordinates {(1, 0.001944) (2, 0.121111) (3, 0.949722) (4, 2.013333) (5, 6.370556) (6, 14.331389) (7, 27.398889) (8, 43.841111)}; 
		\addplot+[mark=none]	 coordinates {(1, 0.001944) (2, 0.059444) (3, 0.602222) (4, 1.142222) (5, 2.168333) (6, 4.438611) (7, 8.518611) (8, 12.950833) (9, 15.694444) (10, 17.025000) (11, 17.414722) (12, 17.470556)}; 
		\addplot+[mark=none]	 coordinates {(1, 0.001667) (2, 0.041389) (3, 0.345278) (4, 1.037500) (5, 1.687778) (6, 3.033333) (7, 5.667778) (8, 8.555556) (9, 10.481389) (10, 11.361667) (11, 11.618611) (12, 11.658056)}; 
		\addplot+[mark=none]	 coordinates {(1, 0.001389) (2, 0.028889) (3, 0.251944) (4, 1.009167) (5, 1.666389) (6, 3.071389) (7, 6.040833) (8, 9.346667) (9, 11.522500) (10, 12.414167) (11, 12.687500) (12, 12.728056)}; 
		\addplot+[mark=none]	 coordinates {(1, 0.001389) (2, 0.026944) (3, 0.229444) (4, 0.971667) (5, 1.519444) (6, 2.745556) (7, 5.296389) (8, 8.354722) (9, 10.413611) (10, 11.350278) (11, 11.617222) (12, 11.656944)}; 
		\addplot+[mark=none]	 coordinates {(1, 0.001667) (2, 0.028889) (3, 0.259444) (4, 1.009722) (5, 1.666944) (6, 3.073056) (7, 6.042222) (8, 9.347222) (9, 11.520833) (10, 12.412222) (11, 12.685833) (12, 12.726667)}; 
		\addplot+[mark=none]	 coordinates {(1, 0.013611) (2, 0.041667) (3, 0.260278) (4, 1.023056) (5, 1.668889) (6, 3.073889) (7, 5.944444) (8, 9.275278) (9, 11.523889) (10, 12.469722) (11, 12.760833) (12, 12.804722)}; 
\end{groupplot}\end{tikzpicture}
\caption{Cumulative \# of \ctwo{} sentences (a) and the time needed to generate sentences (b) with up to $x$ literals. At most six literals per clause, at most two clauses per sentence, one unary, and one binary predicate, $k \leq 1$.}\label{fig:c62}\end{figure*}


Finally, we ran a setup with two unary and two binary predicates, with at most three clauses, and at most two literals per clause; results for \fotwo{} and \ctwo{} are shown in \Cref{fig:fo23} and \Cref{fig:c23}, respectively. This last setup was the hardest due to the higher branching factor w.r.t. the previous ones.

\begin{figure*}



\clearpage

\end{document}